\newcommand{\citep}[1]{\cite{#1}}
\begin{document}

\title[Strongly universal string hashing is fast]{Strongly universal string hashing is fast}
\shortauthors{D. Lemire and O. Kaser}

\author{Daniel Lemire} \email{lemire@gmail.com} \affiliation{LICEF Research Center, TELUQ, Universit\'e du Qu\'ebec, Canada}\address{} 
\author{Owen Kaser} \affiliation{Department of CSAS, University of New Brunswick, Canada}

%%%%%%% computer journal
\keywords{String Hashing; Barrett Reduction; Carry-less Multiplications; Binary Finite Fields; Non-cryptographic Hash Functions}

\begin{abstract}
We present fast strongly universal string hashing families: 
they can process data at a rate of  0.2~CPU cycle per byte. 
Maybe surprisingly, we find that these families---though they require a large
buffer of random numbers---are often faster than popular hash functions with weaker
theoretical guarantees.
Moreover, conventional wisdom is that hash functions with fewer multiplications are faster.
Yet we find that they may fail to be faster
due to operation pipelining. 
 We present experimental results on 
several processors including %low-powered [elsewhere we used low-power]
low-power 
processors.
Our tests include
hash functions designed for processors with the 
Carry-Less Multiplication (CLMUL) instruction set.
 We also
prove, using accessible proofs, the strong universality of our families.
\end{abstract}

\maketitle

\section{Introduction}
For 32-bit integers, random hashing with good theoretical guarantees
can be just as fast as popular alternatives~\citep{338597}. 
In turn, these guarantees ensure the reliability of various algorithms
and data structures: set intersection~\citep{Ding:2011:FSI:1938545.1938550},
 frequent-item mining~\citep{Cormode:2010:MFF:1731351.1731356},
count estimation~\citep{Gibbons2001,BarYossef2002},
and  
hash tables~\citep{pagh2004cuckoo,Dietzfelbinger:2009:RUC:1496770.1496857,Aumuller:2012:EEH:2404160.2404171}. 
We want to show
that we can also get good theoretical guarantees over 
larger objects (such as strings) without sacrificing speed.  
For example, we consider
variable-length strings made of 32-bit characters: all data structures can
be represented as such strings, up to some padding.

We restrict 
our attention to hash functions mapping strings to $L$-bit integers, that is, integers in $[0,2^L)$ for some positive integer $L$.
In random hashing, we select a hash function at random from a family~\citep{carter1979uch,wegman1981new}. 
The hash function can be chosen whenever the software is initialized. While random hashing is not yet commonplace, it can have significant security benefits~\cite{Crosby:2003:DSV:1251353.1251356} in a hash table: without randomness, an attacker can more easily exploit the fact that adding $n$~keys hashing to the same value 
typically
takes quadratic time ($\Theta(n^2)$). 
For this reason, random hashing was adopted in the Ruby language as of version~1.9~\cite{ocert2011003} and in the Perl language as of version~5.8.1. 

A family of hash functions is $k$-wise independent (or $k$-independent) if 
the hash values of any $k$~distinct elements are independent. For example,
a family is pairwise independent---or strongly universal---if given any two distinct elements $s$ and $s'$, 
their hash values $h(s)$ and $h(s')$ are independent: \begin{eqnarray*}P(h(s)=y | h(s')=y') = P(h(s)=y)\end{eqnarray*} for any two
hash values $y,y'$. (Some authors prefer the terms 2-independent or 2-universal to describe strongly universal hash families.) 
When a hashing family is not strongly universal,
it can still be universal if 
the probability of a collision is 
no larger than if it were strongly universal: $P(h(s)=h(s'))\leq 1/2^L$ when $2^L$ is the number of hash values. 
If the collision probability is merely bounded by some $\epsilon$ larger than $1/2^L$ but smaller than $1$ ($P(h(s)=h(s'))\leq \epsilon<1$), we have an almost universal family. 
However, strong universality might be more desirable than universality or almost universality:
\begin{itemize}
\item 
We say that a family is uniform if all hash values are equiprobable ($P(h(s)=y)=1/2^L$ for all $y$ and $s$): strongly universal families are uniform, but  universal or almost universal families may fail to be uniform. 
To see that universality fails to imply uniformity, consider
the family made of the two functions 
over 1-bit integers (0,1): the identity and a function mapping all
values to zero. The probability of a collision between two distinct values is exactly $1/2$ which ensures universality even though we do not have uniformity since $P(h(0)=0)=1$. 
 \item 
Moreover, if we have strong universality over  $L$~bits, then we also have it over any subset of bits. The corresponding result may fail for universal and almost universal families: we might have universality over $L$~bits, but fail to have almost universality over some subset of bits. 
Consider the non-uniform but universal family $\{ h(x)= x\}$ over $L$-bit integers:  if we keep only the least significant $L'$~bits ($0<L'<L$), universality is lost
since  $h(0)\bmod {2^{L'}} = h(2^{L'}) \bmod {2^{L'}}$.
\end{itemize}

There is no need to use slow operations such as 
modulo operations,
 divisions or operations in finite fields to have strong universality. 
In fact, for short strings having few distinct characters,   Zobrist hashing
requires nothing more than table look-ups and bitwise exclusive-or operations, and it is more than strongly
universal (3-wise independent)~\citep{zobrist1970,zobrist1990new}. 
Unfortunately, 
it becomes prohibitive for long strings as it requires the storage of $n c$~random numbers where
$n$ is the maximal length of a string and $c$ is the number of distinct characters.

A more practical approach to strong universality is Multilinear hashing (\S~\ref{sec:multilinear}). 
Unfortunately, it normally requires that the computations be executed in
a finite field. Some processors have instructions for finite fields (\S~\ref{sec:carryless}) or they can be emulated with a software library (\S~\ref{sec:experimentsonfinitefields}).
However, if we are willing to double the number of random bits,
we can implement it using regular integer arithmetic. 
Indeed, using an idea from Dietzfelbinger~\cite{dietzfelbinger1996universal}, 
we 
implement it using only one multiplication and one addition per character (\S~\ref{sec:technical}).
We further attempt to speed it up 
by reducing the number of multiplications by half. 
We believe that these families are the fastest strongly universal hashing
families on current computers.  
We evaluate these hash families experimentally~(\S~\ref{sec:experiments}):
\begin{itemize}
\item 
Using fewer multiplications has often  improved performance, 
especially on low-power processors~\citep{etzel1999square}.
Yet 
trading away the number of multiplications fails to improve (and may even degrade) performance 
on several processors according to our experiments---which include low-power processors. However,  reducing the number of multiplications is beneficial on some processors (e.g., AMD V120).
\item We also find that strongly universal hashing may be computationally inexpensive compared to common hashing functions, as long as
we ignore the overhead of generating long strings of random numbers.
In effect---if memory is abundant compared to the length of the strings---the strongly universal Multilinear family is 
faster than many of the commonly
used alternatives.

\item  
We consider hash functions  designed for hardware supported carry-less multiplications (\S~\ref{sec:carryless}). This support
should drastically improve the speed of some operations over binary finite fields ($GF(2^L)$). Unfortunately, we find that the carry-less hash functions fail to be competitive (\S~\ref{sec:carryless-experiments}). 
\end{itemize}

\section{The Multilinear family}
\label{sec:multilinear}

The Multilinear hash family is one of the simplest strongly universal families~\citep{carter1979uch}. 
It takes the form of a scalar product between random values (sometimes called keys) and string components, where operations are over a finite field: \begin{eqnarray*}h(s)=m_1+\sum_{i=1}^n m_{i+1} s_i.\end{eqnarray*}
The hash function $h$ is defined by the  randomly generated values $m_1, m_2, \dots$  It is strongly universal over fixed-length strings. We can also apply it to variable-length strings
as long as we forbid strings ending with zero. To ensure that strings never end with zero, we can append a character value of one to all variable-length strings.

An apparent limitation of this approach
is that strings cannot exceed the number of random values. In effect, to hash 32-bit strings of length $n$, we need to generate and store $32 ( n + 1)$~random bits using a finite field of cardinality $2^{32}$. 
However, Stinson~\cite{188765} showed that strong universality 
requires at least $1+a(b-1)$~hash functions where $a$ is the number of strings and $b$ is the number of hash values. Thus, if we have 32-bit strings mapped to 32-bit hash values, we need at least $\approx 2^{32 ( n + 1)}$~hash functions: Multilinear is almost optimal.

 Hence, the requirement to store  many random numbers cannot be waived without sacrificing strong universality.
Note that Stinson's bound is not affected by manipulations such as treating a length~$n$ string
of $W$-bit words as a length~$n/2$ string of $2W$-bit words.

If multiplications are expensive and we have long strings, we can
attempt to improve speed by reducing the number of multiplications by half~\citep{motzkin1955evaluation,703969}: 
\begin{eqnarray}\label{mhm-gf}h(s)=m_1+\sum_{i=1}^{ n/2 } (m_{2i} + s_{2i-1})(m_{2i+1} + s_{2i}).\end{eqnarray} 
Indeed, this new form follows from the fact that we can rewrite the scalar product $m_{2i} s_{2i} + m_{2i+1}  s_{2i-1}$ as a single multiplication $(m_{2i} + s_{2i-1})(m_{2i+1} + s_{2i})$ minus two terms, one that does not depend on the string ($m_{2i} m_{2i+1}$) and one that does not depend on the random keys ( $s_{2i-1} s_{2i}$).
While this new form assumes that the number of characters in the string is even, we can simply
pad the odd-length strings with an extra character with value zero. With variable-length strings, the padding to even length must follow the addition of a character value of one. 

Could we reduce the number of multiplications further? Not in general: the computation
of a scalar product
between two vectors of length $n$ requires at least $\lceil n/2 \rceil$~multiplications~\cite[Corollary~4]{winograd1970number}.
However, we could try to avoid generic multiplications altogether and replace
them by squares~\citep{etzel1999square}: 
\begin{eqnarray*}h(s)=m_1+\sum_{i=1}^{n} (m_{i+1} + s_{i})^2.\end{eqnarray*}
Indeed, squares can be sometimes be computed faster. 
Unfortunately, this approach fails in binary finite fields  ($GF(2^L)$) because
\begin{eqnarray*}(m_{i+1} + s_{i})^2 & = & m_{i+1}^2 + m_{i+1} s_{i} + m_{i+1} s_{i}  + s_{i}^2 \\ & = & m_{i+1}^2 +  s_{i}^2\end{eqnarray*} since every element is its own additive inverse. Thus, we   get 
\begin{eqnarray*}h(s)=m_1+\sum_{i=1}^{n} m_{i+1}^2 +  \sum_{i=1}^{n} s_{i}^2\end{eqnarray*}
which is  a poor hash function (e.g., 
$h(\texttt{ab})=h(\texttt{ba})$).

There are fast algorithms to compute  multiplications~\citep{Brent:2008:FMG:1789715.1789728,lemi:one-pass-journal,Greenangalois} in binary finite fields. Yet 
these operations remain much slower than a
native operation (e.g., a regular 32-bit integer multiplication). 
However, some recent
processors have support for finite fields. In such cases,
the penalty could be small for using finite fields, as opposed to regular integer
arithmetic (see 
\S~\ref{sec:carryless} and \S~\ref{sec:carryless-experiments}). (Though they are outside our scope, there are also fast techniques for computing hash functions over a finite field having prime cardinality~\cite{krovetz2001fast}.)

\section{Making Multilinear strongly universal in the ring $\mathbb{Z}/2^{K}\mathbb{Z}$}
\label{sec:technical}

On processors without support for binary finite fields, we can trade memory for speed to essentially get the same properties as finite fields on \emph{some} of the bits using fast integer arithmetic. 
For example,
Dietzfelbinger~\cite{dietzfelbinger1996universal} showed that the family of hash functions of the form 
\begin{eqnarray*}h_{A,B}(x) =  \left (Ax +B \mod{2^{K}}\right )  \div 2^{L-1}\end{eqnarray*}  where the integers $A,B \in [0,2^K)$ and $x \in [0,2^L)$
is strongly universal for $K > L-1$. 
(%For fewer parentheses,  owen adds words so math isn't split across line
To reduce the number of parentheses used,
we adopt the convention that
$Ax +B \mod{2^{K}} \equiv (Ax +B) \bmod{2^{K}}$. The symbol  $\div$ denotes integer division: $x \div y = \lfloor x/y \rfloor$ for positive integers.)
 We generalize Dietzfelbinger hashing from the linear to the multilinear case. 

The main difference between
a finite field and common integer arithmetic (in the integer ring $\mathbb{Z}/2^K \mathbb{Z}$)
 is that elements of fields have inverses: given the equation $ax=b$, there is a unique solution $x= a^{-1}b$ when $a\neq 0$. However, the same is ``almost'' true 
in integer rings used for computer arithmetic as long as the variable $a$ is small.
For example, when $a=1$, we can solve for $ax=b$ exactly ($x=b$). When $a=2$, then there are at most two solutions to the equation $ax=b$. We build on these observations to derive a stronger result.

We let $\tau=\textrm{trailing}(a)$  be the number of trailing zeros of the
integer $a$ in binary notation.
For example, we have that
$\textrm{trailing}(2^j)=j$. 

\begin{proposition} 
\label{prop:almostinvertible}
Given integers $K,L$ satisfying 
$K\geq L-1\geq 0$, consider the equation 
  \begin{eqnarray*}\left ( ax+c \mod{2^{K}} \right ) \div 2^{L-1} = b
  \end{eqnarray*} where $a$ is an integer in $[1,2^L)$, $b$ is an integer in $[0,2^{K-L+1})$ and 
  $c$ an integer in $[0,2^K)$. Given $a$, $b$ and $c$, there are exactly $2^{L-1}$~integers $x$ in $[0,2^K)$ satisfying the equation.
\end{proposition}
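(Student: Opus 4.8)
The plan is to recast the equation as an interval-membership condition and then count preimages using the structure of multiplication modulo a power of two. Writing $y = (ax+c) \bmod 2^{K}$, the equation $y \div 2^{L-1} = b$ holds exactly when $y$ lands in the block $I = [\,b\cdot 2^{L-1},\,(b+1)\cdot 2^{L-1}\,)$, an interval of $2^{L-1}$ consecutive integers. Since $b < 2^{K-L+1}$, we have $(b+1)2^{L-1} \le 2^{K}$, so this block sits inside $[0,2^{K})$ and the problem becomes: count the $x \in [0,2^{K})$ for which $(ax+c)\bmod 2^{K} \in I$.

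First I would analyze the map $x \mapsto ax \bmod 2^{K}$ on $\mathbb{Z}/2^{K}\mathbb{Z}$. Setting $\tau = \textrm{trailing}(a)$, we write $a = 2^{\tau} a'$ with $a'$ odd; because $a \in [1,2^{L})$ we have $0 \le \tau \le L-1 \le K$. As $a'$ is invertible modulo $2^{K}$, the map is a group endomorphism whose image is precisely the multiples of $2^{\tau}$ in $[0,2^{K})$ and whose kernel has size $2^{\tau}$; hence $ax \bmod 2^{K}$ attains each multiple of $2^{\tau}$ exactly $2^{\tau}$ times as $x$ ranges over $[0,2^{K})$. Translating by $c$, the quantity $(ax+c)\bmod 2^{K}$ therefore hits each of the $2^{K-\tau}$ residues congruent to $c$ modulo $2^{\tau}$ exactly $2^{\tau}$ times, and hits nothing else.

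It then remains to count how many of these attainable values fall in $I$. The crucial observation is that $I$ has length $2^{L-1}$, which is a multiple of $2^{\tau}$ since $\tau \le L-1$; consequently any window of $2^{L-1}$ consecutive integers contains exactly $2^{L-1}/2^{\tau} = 2^{L-1-\tau}$ integers congruent to $c$ modulo $2^{\tau}$, independently of where the window begins. Each such value is attained $2^{\tau}$ times, so the total number of solutions is $2^{L-1-\tau}\cdot 2^{\tau} = 2^{L-1}$, as claimed.

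The step I expect to be the main obstacle is the multiplicity bookkeeping in the second paragraph: one must justify cleanly that $x \mapsto ax \bmod 2^{K}$ is exactly $2^{\tau}$-to-one onto the multiples of $2^{\tau}$, rather than arguing loosely from $\gcd(a,2^{K})$. Everything after that is a divisibility count that goes through precisely because $2^{\tau} \mid 2^{L-1}$, which is guaranteed by the hypothesis $a < 2^{L}$ forcing $\tau \le L-1$.
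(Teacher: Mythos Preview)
Your proof is correct and rests on the same core ingredients as the paper's: write $a=2^{\tau}a'$ with $a'$ odd, use that $\tau\le L-1$ because $a<2^{L}$, exploit invertibility of $a'$ modulo the relevant power of two, and finish with the product count $2^{L-1-\tau}\cdot 2^{\tau}=2^{L-1}$.

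The packaging differs. The paper performs an explicit algebraic reduction of the expression, showing via bit-level manipulations that
\[
\bigl((ax+c)\bmod 2^{K}\bigr)\div 2^{L-1}
=\bigl((a'x'+c')\bmod 2^{K-\tau}\bigr)\div 2^{L-1-\tau},
\]
with $x'=x\bmod 2^{K-\tau}$ and $c'=c\div 2^{\tau}$; it then solves $a'x'+c'\equiv z\pmod{2^{K-\tau}}$ uniquely for each of the $2^{L-1-\tau}$ targets $z$ with $z\div 2^{L-1-\tau}=b$, and lifts each $x'$ to $2^{\tau}$ values of $x$. You instead analyze the endomorphism $x\mapsto ax\bmod 2^{K}$ structurally (image $=2^{\tau}\mathbb{Z}/2^{K}\mathbb{Z}$, each fiber of size $2^{\tau}$), translate by $c$, and then count how many elements of a single residue class modulo $2^{\tau}$ fall in an interval of length $2^{L-1}$, using that $2^{\tau}\mid 2^{L-1}$. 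Your route avoids the somewhat delicate step of discarding $c\bmod 2^{\tau}$ inside the division and is arguably cleaner; the paper's route has the advantage of being constructive, yielding an explicit recipe for the solutions (as illustrated in its worked example).
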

\begin{proof}
Let $\tau=\textrm{trailing}(a)$. We have $\tau \leq L-1$ since $a\in [1,2^L)$.
Because $a$ is non-zero, 
we have that $a'= a \div 2^{\tau}$ is odd
 and  
 $a= 2^{\tau} a'$.

We have\begin{eqnarray*}
\begin{split}
  \big ( (ax+c) & \mod{2^{K}} \big  )  \div 2^{L-1}\\
& =  \left   ( (2^{\tau} a' x+c) \bmod{2^{K}} \right ) \div 2^{L-1}\\
&=  \left ( 2^{\tau} [ a' x+(c\div 2^{\tau})] \right . \\
& \quad+ \left (c\bmod {2^{\tau}}) \mod{2^{K}} \right) \div 2^{L-1}.
\end{split}
  \end{eqnarray*}

We show that the term  $(c\bmod {2^{\tau}})$ can be removed.
Indeed, consider that
%try to fit math by juggling words (be verbose)
the $\tau$~least significant
bits of 
$2^{\tau} [ a' x+(c\div 2^{\tau})] + (c \bmod 2^{\tau})$
are
those of $c \mod {2^{\tau}}$, and % whereas --- change to fit
the more significant bits are those of  $2^{\tau} [ a' x+(c\div 2^{\tau})]$.
The final division by 
 ${2^{L-1}}$ 
will dismiss the $L-1$~least significant bits, and $\tau \leq L-1$, so that the term
$(c\bmod {2^{\tau}})$ can be ignored.

Hence, we have
\begin{eqnarray*}
\begin{split}
  \big ( ax+c &\mod{2^{K}} \big ) \div 2^{L-1}  \\
 & =   ( 2^{\tau} [ a' x+(c\div 2^{\tau})]  \mod{2^{K}}) \div 2^{L-1}\\
& =  \left(2^{\tau}\left [   a' x+(c\div 2^{\tau})  \mod{2^{K-\tau}}\right ]\right ) \div 2^{L-1}  \\
&=  \left (   a' x+(c\div 2^{\tau})  \mod{2^{K-\tau}}\right ) \div 2^{L-1-\tau}\\
&=  \left (   a' (x \bmod{2^{K-\tau}}) \right . \\
&\quad+ \left (c\div 2^{\tau})  \mod{2^{K-\tau}}\right ) \div 2^{L-1-\tau}. 
\end{split}
  \end{eqnarray*}
  Setting $x'=x \bmod{2^{K-\tau}}$ and
  $c'=c\div 2^{\tau}$, we finally have
\begin{eqnarray*} \begin{split} \big ( ax+c &\mod{2^{K}}\big ) \div 2^{L-1}  \\ &=   \left (   a' x'+c'\mod{2^{K-\tau}} \right ) \div 2^{L-1-\tau}. \end{split}\end{eqnarray*} 
Let $z$ be an integer  such that $z \div 2^{L-1-\tau}=b$.
Consider $a' x' +c' \mod{2^{K-\tau}}=z$. We can rewrite it as
$a' x'  \bmod{2^{K-\tau}}=z- c' \bmod{2^{K-\tau}}$. 
Because $a'$ is odd, 
$a'$ and $2^{K-\tau}$ are coprime (their greatest common divisor is 1).
Hence, there is a unique integer
$x'\in [0,2^{K-\tau})$ such 
that $a' x'  \bmod{2^{K-\tau}}=z - c' \bmod{2^{K-\tau}}$~\cite[Cor. 31.25] {Cormen:2009:IAT:1614191}. 
 
Given $b$, there are $ 2^{L-1-\tau}$~integers $z$ such that
$z \div 2^{L-1-\tau}=b$. 
Given $x'$, there are $2^{\tau}$~integers $x$ in $[0,2^K)$ such that  $x'=x \bmod{2^{K-\tau}}$. 
It follows that there are $ 2^{L-1-\tau} \times 2^{\tau} = 2^{L-1}$~integers $x$ in $[0,2^K)$ such that  
$\left (   a' x'+c'\mod{2^{K-\tau}} \right ) \div 2^{L-1-\tau} = b$ holds.
\end{proof}

\begin{example}
Consider%
, for instance, %verbosity to adjust equation break
the equation $(6x+10 \bmod 64) \div 4 = 5$.
 By Proposition~\ref{prop:almostinvertible}, there must be exactly 4~solutions
to this equation (setting $K=6, L=3$). We can find
them using the proof of the lemma.
The integer $6$ has 1~trailing zero in binary notation ($110$)
so that $\tau=1$. We can write $6=2\times 3$ so that $a'=3$.
Similarly, $c'=10 \div 2=5$. Hence we must consider the
equation $3 x' + 5 \bmod{2^5} = z$ for values of $z$ such that
$z \div  2 = 5$. There are two such values: $z=10$ and $z=11$. We
have that $$\begin{array}{l}
3 x' + 5 \bmod{32} = 10 \Rightarrow  3 x'  \bmod{32} = 5 \Rightarrow x'=23 %\text{~and}\\  --- made overfull box
;\\
3 x' + 5 \bmod{32} = 11 \Rightarrow 3 x'  \bmod{32} = 6 \Rightarrow x'=2.%\hspace*{.55em}
\end{array}$$
It remains to solve for $x$ in  $x'=x\bmod 32$ with the constraint that $x$ is an integer in $[0,64)$.
When $x'=2$, we have that $x\in \{2, 34\}$.
When $x'=23$, we have that $x\in \{23,55\}$.
Hence, the solutions are 2, 23, 34 and 55. 
\end{example}

Using Proposition~\ref{prop:almostinvertible}, we can show that fast variations of Multilinear are strongly universal  even though we use regular integer arithmetic,  not finite fields.

\begin{theorem}
\label{thm:multilinearisstrong}
Given integers $K,L$ satisfying $K\geq L-1\geq 0$, consider the families of $(K-L+1)$-bit hash functions
\begin{itemize}
\item\textsc{Multilinear}:
\begin{eqnarray*}h(s)= \Big (\Big (m_1+ %\textstyle
\sum_{i=1}^{n} m_{i+1} s_i \Big )\mod{2^K}\Big ) \div 2^{L-1}\end{eqnarray*}
\item\textsc{Multilinear-HM}: 
\begin{eqnarray*}h(s)= \Big (\Big (m_1+%\textstyle
\sum_{i=1}^{ n/2} (m_{2i}+ s_{2i-1})(m_{2i+1}+ s_{2i})\Big )\\  \mod{2^K}\Big ) \div 2^{L-1}\end{eqnarray*}
which assumes that $n$ is even. 
\end{itemize}
Here the $m_i$'s are random  integers in $[0,2^K)$ and  the string characters $s_i$ are integers in $[0,2^L)$. 
These two families are strongly universal over fixed-length strings, or over variable-length strings that do not end with the zero character. We can apply the second family to strings of odd length  by appending an extra zero element so that all strings have an even length. 
\end{theorem}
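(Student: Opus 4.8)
The plan is to reduce strong universality to a single counting statement and then feed that statement twice into Proposition~\ref{prop:almostinvertible}. Writing $b = 2^{K-L+1}$ for the number of hash values, a family is strongly universal exactly when, for every pair of distinct strings $s,s'$ and every pair of targets $y,y' \in [0,b)$, the number of key-tuples $(m_1,\dots,m_{n+1}) \in [0,2^K)^{n+1}$ with $h(s)=y$ and $h(s')=y'$ equals $2^{K(n+1)}/b^2$. So I would fix $s \neq s'$, $y$ and $y'$, isolate two well-chosen keys, hold the remaining $n-1$ keys at arbitrary fixed values, and show that exactly $2^{2(L-1)}$ of the $2^{2K}$ choices of the two free keys produce $(y,y')$; summing over the $2^{K(n-1)}$ configurations of the fixed keys then gives $2^{K(n-1)}\cdot 2^{2(L-1)} = 2^{K(n+1)}/b^2$, as required.

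For \textsc{Multilinear} I would pick an index $j$ with $s_j \neq s'_j$ (swapping the roles of $(s,y)$ and $(s',y')$ if needed so that $d := s_j - s'_j \in [1,2^L)$) and free the keys $m_1$ and $m_{j+1}$. With the other keys fixed, $S := m_1 + \sum_i m_{i+1}s_i$ and its primed analogue $S'$ are affine in the two free keys. The first step uses the additive key $m_1$: for each value of $m_{j+1}$, the equation $h(s')=y'$ has the form of Proposition~\ref{prop:almostinvertible} with $a=1$, so exactly $2^{L-1}$ values of $m_1$ place $S' \bmod 2^K$ in the window $[y'2^{L-1},(y'+1)2^{L-1})$. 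Reparametrising each such solution by its window value $w' = S' \bmod 2^K$, a short substitution gives $S \equiv w' + d\,m_{j+1} + \text{const} \pmod{2^K}$. The second step then applies Proposition~\ref{prop:almostinvertible} again, now with $a = d \in [1,2^L)$, so that for each of the $2^{L-1}$ window values $w'$ there are exactly $2^{L-1}$ values of $m_{j+1}$ giving $h(s)=y$. Multiplying yields the promised $2^{2(L-1)}$.

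The hard part is \textsc{Multilinear-HM}, because each key enters through a product rather than linearly, so the reparametrisation above does not apply verbatim. Here I would first locate a pair index $k$ at which $s$ and $s'$ differ, then split into cases according to whether they differ in the first slot ($s_{2k-1} \neq s'_{2k-1}$) or the second ($s_{2k} \neq s'_{2k}$); at least one holds. In each case I would free $m_1$ together with one key of the $k$-th pair while freezing its companion, which linearises the $k$-th product: freezing $m_{2k+1}$ makes that term affine in $m_{2k}$ and contributes $(m_{2k+1}+s_{2k}) - (m_{2k+1}+s'_{2k}) = s_{2k}-s'_{2k}$ to the difference $S-S'$, while freezing $m_{2k}$ contributes $s_{2k-1}-s'_{2k-1}$. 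Selecting the slot in which the strings differ makes this coefficient nonzero and, after the same sign swap as before, a member of $[1,2^L)$ --- exactly the hypothesis Proposition~\ref{prop:almostinvertible} requires --- so the two-step counting of the \textsc{Multilinear} case carries over unchanged.

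Finally, I would dispose of variable-length strings by padding the shorter string with trailing zeros up to the common length. The hypothesis that strings do not end in zero guarantees that two distinct strings still differ in some position after padding: if their lengths differ, the longer one is nonzero in its last slot while the padded shorter one is zero there. Hence a differing index $j$ (or pair $k$) always exists and the computation above applies verbatim; crucially, the resulting probability $1/b^2$ does not depend on $n$, so strong universality follows uniformly in the length. For \textsc{Multilinear-HM} the only extra remark is that an odd-length string is first padded with a single zero, which leaves its hash unchanged, so the even-length analysis covers it.
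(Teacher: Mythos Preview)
Your proposal is correct and follows essentially the same route as the paper's own proof: isolate $m_1$ and one further key $m_{j+1}$ (respectively $m_{2k}$ or $m_{2k+1}$ in the HM case), use the bijection between $m_1$ and the full inner sum modulo $2^K$ to pin down one window of $2^{L-1}$ values, and then invoke Proposition~\ref{prop:almostinvertible} with $a = s_j - s'_j$ to count the remaining key. The only cosmetic differences are that you swap the roles of $(s,y)$ and $(s',y')$ relative to the paper, you phrase the $m_1$ step as an application of Proposition~\ref{prop:almostinvertible} with $a=1$ rather than simply as a bijection, and your case split for \textsc{Multilinear-HM} (free $m_{2k}$ when the second slot differs, free $m_{2k+1}$ when the first slot differs) is stated a bit more explicitly than in the paper.
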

\begin{proof}
We begin with the first family (\textsc{Multilinear}).
Given any two distinct strings $s$ and $s'$, consider
the equations $h(s)=y$ and $h(s')=y'$ for any two hash values $y$ and $y'$.
Without loss of generality, we can assume that the strings have the same length. If not,
we can pad the shortest string with zeros without changing its hash value.
We need to show that $P(h(s)=y \land h(s')=y')= 2^{2(L-K-1)}$.
Because
the two strings are distinct, we can find $j$ such that
$s_j \neq s'_j$. Without loss of generality, assume that
$s'_j -s_j \in [0,2^L)$.

We want to solve the equations 
\begin{eqnarray}
&\left (\left (m_1+\textstyle\sum_{i=1}^{n} m_{i+1} s_i \right ) \mod {2^K}\right )  \div 2^{L-1} =  y,\label{eqn:xx}\\
&\left (\left (m_1+\textstyle\sum_{i=1}^{n} m_{i+1} s'_i \right )\mod {2^K}\right ) \div 2^{L-1}   =  y'\label{eqn:xxx}
\end{eqnarray}
for integers $m_1, m_2, \ldots$ in $[0,2^K)$.

Consider the following equation  
\begin{eqnarray*}
\left (m_1+\textstyle\sum_{i=1}^{n} m_{i+1} s_i \right ) \mod {2^K}   & = & z.
\end{eqnarray*}
There is a bijection
between $m_1$ and $z\in [0,2^K)$. That is, for every value of $m_1$,
there is a unique $z$, and vice versa. Specifically, we have
\begin{eqnarray*}
m_1    & = & z - \sum_{i=1}^{n} m_{i+1} s_i  \mod {2^K}.
\end{eqnarray*}
If we choose $z$ such that $z\div  2^{L-1}= y$, we
effectively solve Equation~\ref{eqn:xx}. 
By substitution in Equation~\ref{eqn:xxx}, we have
\begin{eqnarray*}\begin{split}
&\Big ( m_{j+1} (s'_j -s_j) +z 
 + \textstyle\sum_{i\neq j,i=1}^{n} m_{i+1} (s'_i-s_i)
 \\
&\quad \mod {2^K} \Big )  \div 2^{L-1} 
  =  y'.\end{split}
\end{eqnarray*}
This equation is independent of $m_1$.
 By Proposition~\ref{prop:almostinvertible}, there are
exactly $2^{L-1}$~solutions $m_{j+1}$ to this last equation.
(Indeed, in the statement of Proposition~\ref{prop:almostinvertible}, substitute 
$m_{j+1}$ for $x$, $s'_j -s_j$ for $a$, $z+\sum_{i\neq j, i=1}^{n} m_{i+1} (s'_i-s_i)  \mod {2^K}$ for $c$ and $y'$ for $b$.) 

Meanwhile, there are $2^{L-1}$~possible values  $z$ such that $z \div 2^{L-1} = y$.
Because there is a bijection between $m_1$ and $z$, there are also $2^{L-1}$~possible values for $m_1$.

So, focusing only on $m_1$ and $m_{j+1}$, there are 
$2^{L-1} \times 2^{L-1}$ values satisfying  $h(s)=y$ and $h(s')=y'$. Yet there
are $2^K \times 2^K$~possible pairs $m_1, m_{j+1}$. Thus
the probability that $h(s)=y$ and 
also that %intentional verbosity
$h(s')=y'$ 
 is
$\frac{2^{L-1} \times 2^{L-1}}{2^K \times 2^K}=2^{2(L-K-1)}$, which completes the proof for the first family.

The proof that the second family (\textsc{Multilinear-HM}) is strongly universal is similar. 
As before, set  $z$   in $[0,2^K)$ such that   $z \div 2^{L-1} = y$.
 Solve for $m_1$ from the first equation:
\begin{eqnarray*}
m_1=\Big (z-%\textstyle
\sum_{i=1}^{n/2} (m_{2i}+ s_{2i-1})(m_{2i+1}+ s_{2i})\Big ) \bmod {2^{K}}.
\end{eqnarray*}
Then by substitution, we get 
\begin{eqnarray*}\begin{split}
& \Big(\Big (\textstyle\sum_{i=1}^{n/2} (m_{2i}+ s'_{2i-1})(m_{2i+1}+ s'_{2i})  - \\ 
&\quad (m_{2i}+ s_{2i-1})(m_{2i+1}+ s_{2i}) \\
&\quad \quad +  z \Big ) \mod 2^K  \Big)\div 2^{L-1} =y'.
\end{split}\end{eqnarray*}
%fiddling with wording so that equations that are almost one line wide
%can fit on a line.  Hopefully it does not feel too verbose now.
We can rewrite  this last equation---%, as either     
if $j$ is even, as
%as either 
$((m_{j} (s'_j-s_j) + \rho+z\mod 2^K )\div 2^{L-1} =y'$%if $j$ is even or as
; if $j$ is odd, as 
$((m_{j+2} (s'_j-s_j) + \rho+z\mod 2^K )\div 2^{L-1} =y'$%
%if $j$ is odd
, where $\rho$
is independent of either $m_j$ (when $j$ is even) or $m_{j+2}$ (when $j$ is odd). As before, by Proposition~\ref{prop:almostinvertible}, there are
exactly $2^{L-1}$~solutions for $m_{j}$ ($j$ even) or $m_{j+2}$ ($j$ odd)
if $z$ is fixed. As before, there are $2^{L-1}$~distinct possible values for $z$, and $2^{L-1}$~distinct corresponding values for $m_1$. Hence,
the pair $m_1, m_j$ can take $2^{L-1} \times 2^{L-1}$~distinct values out
of $2^K \times 2^K$ values, which completes the proof.
\end{proof}

To apply  Theorem~\ref{thm:multilinearisstrong} to variable-length strings, we can append the character value one to all strings so that they never end with the character value zero, as in \S~\ref{sec:multilinear}.  

To extend \textsc{Multilinear-HM} to variable-length strings, we can hash variable-length strings as if they were fixed-length strings ($h(s)$). We then also hash the length of the string ($|s|$) using another pairwise independent hash family ($g(|s|)$. Finally, we combine the two hash values using the exclusive~or ($h(x) \oplus g(|s|)$): the result is pairwise independent over variable-length strings. There are more efficient alternatives as well. For example, if the length of the string fits in a word ($|s| < 2^L$), we can prepend to all strings a word containing their length before computing the hash value.

Theorem~\ref{thm:multilinearisstrong} is both more
general (because it includes strings) and more specific (because
the cardinality of the set of hash values is a power of two) than
a similar result by 
Dietzfelbinger~\cite[Theorem~4]{dietzfelbinger1996universal}. However,
we believe our proof is more straightforward: we mostly use
elementary mathematics. 

While Dietzfelbinger did not consider the multilinear case,
others
proposed variations suited to string hashing.
P\v{a}tra\c{s}cu and Thorup~\cite{patrascu2010power} state without proof that  \textsc{Multilinear-HM}
over strings of length two is strongly universal for $K=64,L=32$. 
They extend this approach to strings, taking characters
two by two: \begin{eqnarray*}\begin{split}
h(s)&= \Big (\Big (\bigoplus_{i=1}^{n/2} (m_{3i-2}+ s_{2i-1})(m_{3i-1}+ s_{2i}) + m_{3i }\Big )\\
& \quad \mod{2^K}\Big ) \div 2^{L}\end{split}\end{eqnarray*}
where $\bigoplus$ is the bitwise exclusive-or operation 
and $n$ is even. 
Unfortunately, their approach
uses  more operations  and requires 50\% more random numbers
than  \textsc{Multilinear-HM}. 
 They also refer
to an earlier reference \citep{1496842} where a similar scheme was erroneously
described
as universal, and presented as folklore: 
\begin{eqnarray*}\begin{split}
h(s)&= \Big (\Big(\bigoplus_{i=1}^{n/2} (m_{2i+1}+ s_{2i+1})(m_{2i+2}+ s_{2i+2})\Big)\\&\quad\mod{2^K}\Big ) \div 2^{L}\end{split}\end{eqnarray*}
where $n$ is even.
To falsify the universality of this last family, we can verify numerically that the strings $0,0$ and
$2,6$ collide with probability $\frac{576}{4096}>\frac{1}{2^3}$,
for $K=6,L=3$. %inverted sentence to avoid splitting an expression 
In any case,
we see no benefit to this last approach for long strings because  
\textsc{Multilinear-HM}
is likely just as fast, and it is strongly universal. 

\subsection{Implementing \textsc{Multilinear}}
\label{sec:implementing-multilinear}

If 32-bit  values are required,
we can generate a large buffer of 64-bit 
unsigned 
 random integers $m_i$. 
The computation of either 
\begin{eqnarray*}
h(s)= \Big (m_1+\sum_{i=1}^{n} m_{i+1} s_i \mod{2^{64}}\Big ) \div 2^{32}
\end{eqnarray*}
or
\begin{eqnarray*}\begin{split}
h(s)= \Big (&m_1+\sum_{i=1}^{n/2} (m_{2i}+ s_{2i-1})(m_{2i+1}+ s_{2i}) \\& \quad\mod{2^{64}} \Big ) \div 2^{32}\end{split}
\end{eqnarray*}
is then a simple matter using unsigned integer arithmetic common to most modern processors. 
The division by $2^{32}$ can be implemented efficiently by a right shift (\verb+>>+32).

One might object that according to Theorem~\ref{thm:multilinearisstrong}, 
63-bit random numbers are sufficient if we wish to hash 32-bit characters to a 32-bit hash
 value. The division by $2^{32}$ should then be replaced by a division by $2^{31}$. However,
  we feel that such an optimization is unlikely to either save memory or improve speed. 

\textsc{Multilinear} is essentially an inner product
and thus
can benefit from multiply-accumulate CPU instructions:
by processing the multiplication and the subsequent addition
as one 
machine operation, 
the processor may be able
to do the computation faster than if the computations were
done separately. 
Several processors have such integer multiply-accumulate instructions
 (ARM, MIPS, Nvidia and PowerPC).
Comparatively, we do not know of any 
multiply-xor-accumulate instruction in popular processors.

Unfortunately, some languages---such as Java---fail to support unsigned integers.
With a two's complement 
representation, the de facto standard in modern processors, 
additions and multiplications give identical results, up to overflow
flags, as long as no promotion is involved: e.g., multiplying
32-bit integers using 32-bit arithmetic, or 64-bit integers
using 64-bit arithmetic. 
However,
we must still be careful: promotions
and divisions differ when we use signed integers:
\begin{itemize}
\item If we store string characters using 32-bit integers
(\texttt{int}) and random values as 64-bit integers (\texttt{long}),
Java 
will sign-extend 
the 32-bit integer to a 64-bit integer when computing  
 $\texttt{m}_{i+1} * \texttt{s}_i$,
giving an unintended result for negative string characters.
Use $ \texttt{m}_{i+1} * (\texttt{s}_i \& \texttt{0xFFFFFFFFl})$ instead.

\item Unsigned and signed divisions differ.
Correspondingly, 
for
the division by $2^{32}$---to retrieve the 32~most significant bits---%
the unsigned right-shift operator (\verb+>>>+) must be used in Java, and
not the regular right shift (\verb+>>+).
\end{itemize}

Because we assume that the number of bits is a constant, 
the computational complexity of \textsc{Multilinear}  is linear ($O(n)$). 
\textsc{Multilinear} uses $n$~multiplications, $n$~additions,
and one shift,
whereas \textsc{Multilinear-HM} uses $n/2$~multiplications, 
$3n/2$~additions, and one shift.
In both cases we use $2n+1$~operations, although there may be benefits to having fewer multiplications. 
(Admittedly, Single Instruction, Multiple Data (SIMD) 
processors can do several instructions at once, % which makes 
making
an analysis
based solely on the number of operations misleading.)

Consider that we need at least $\approx 32 (n+1)$~random bits for strongly
universal 32-bit hashing of $32n$~bits~\citep{188765} according to Stinson's bound.
That is, we must aggregate $\approx 64 n + 32$~bits into a  32-bit hash value.
Assume that we only allow unary and binary operations. 
A 32-bit binary operation maps 64~bits to 32~bits, a reduction
of 32~bits. Hence, we require at least $2n$~32-bit operations for strongly universal hashing.
Alternatively, we require at least $n$~64-bit operations.
Hence, for $n$~large, \textsc{Multilinear} and \textsc{Multilinear-HM} use
at most twice the minimal number of operations.

\subsection{Word size optimization}
\label{sec:word-size}

The number of required bits is application dependent: for a hash table, one may be able to bound
the maximum table size.  In several languages such as Java, 32-bit hash values
are expected.
Meanwhile the key parameters of our hash functions \textsc{Multilinear} and \textsc{Multilinear-HM} are $L$ 
(the size of characters)
and $K$ (the size of the operations), and these two hash functions deliver $K-L+1$~usable bits.

However, both $K$ and $L$ can be adjusted given a desired number of usable random bits.
Indeed, a length $n$ string of $L$-bit characters can be reinterpreted as a 
length $n \lceil L/L' \rceil $
string
of $L'$-bit characters, for any non-zero $L'$. Thus, 
we can either grow $L$ and $K$ or reduce $L$ and $K$, for the same number of usable bits.

To reduce the need for random bits, 
we should
use large values of $K$.
 Consider a long input string that we can represent as a string of
32-bit or 96-bit characters.
 Assume we want 32-bit hash values. Assume also that our random data only comes
 in strings of  64-bit or 128-bit characters.  
 If we process the string as  a 32-bit string, we require 64~random bits per character.
 The ratio of random strings to hashed strings is two. 
 If we process the string as a 96-bit string, we require 128~random bits per character
 and the ratio of random strings to hashed strings is $128/96=4/3\approx 1.33$.
 What if we could represent the string using 224-bit characters and have random
 bits 
packaged into
characters of 256~bits? We would then have a ratio of $8/7\approx 1.14$.
 
We can formalize this result.
Suppose we require $z$~pairwise independent bits and that we have $M$~input bits. 
Stinson~\cite{188765} showed that this requires at least $1+2^M(2^z-1)$~hash functions. %be verbose to avoid splitting exressions     or, equivalently, 
Equivalently, this requires
$\log(1+2^M(2^z-1))$~random bits. Thus, given any hashing family,
the ratio of its required number of random bits to the Stinson limit (henceforth Stinson
 ratio) must be greater or equal to one. 
The $M$~input 
bits can be represented as an $L$-bit $n$-character string for $M=nL$. Under
\textsc{Multilinear} (and \textsc{Multilinear-HM}), we must have $z=K-L+1$. Thus
we use $K(n+1)=(z+L-1)(\lceil M /L \rceil +1)$~random bits. We have that 
  $(z+L-1)(\lceil M /L \rceil +1) \leq (z+L-1)( M /L +2)$ which is minimized when 
\begin{equation} L=\sqrt{(z-1) \frac{M}{2}}.
\label{eqn:L-minimizing-randbits}
\end{equation}
  Rounding $L=\sqrt{(z-1) M/2}$ up and substituting it back into $(z+L-1)(\lceil M /L \rceil +1)$,
  we get an upper bound on the number of random bits required by  \textsc{Multilinear}. 
This bound
is nearly optimal when $\lceil M /L \rceil \approx M/L$, that is, when $M$ is large. 
  Unfortunately, this estimate fails to consider 
that word sizes are usually prescribed.
For example, we could be required to choose
  $K\in \{8,16,32,64\}$. That is, we have to choose $L\in\{ 9-z,17-z,33-z,65-z \}$.
  Fig.~\ref{fig:stinsonlimit} shows the corresponding Stinson  ratios.
   When there
  are many input bits ($M\gg 1$), the  ratio of  \textsc{Multilinear} converges
  to one. That is, as long as we can decompose input data into strings of large
  characters (having %$approx
approximately %avoid splitting an expression
$\sqrt{(z-1) M/2}$~bits), \textsc{Multilinear} requires
  almost a minimal number of bits. This may translate into an optimal memory usage.
  (The result also holds for \textsc{Multilinear-HM} except that it is slightly less
  efficient for strings having an odd number of characters.)
 If we restrict the word sizes  to common machine word sizes ($K\in \{8,16,32,64\}$), the 
 ratio is $\approx 2$ for large input strings. 
  We also consider the case where we could use 128-bit words (with $K\in \{8,16,32,64, 128\}$). It
  improves
 the ratio noticeably ($\approx 1.33$), as expected. 
 
\begin{figure}
\centering
\includegraphics[width=0.95\columnwidth]{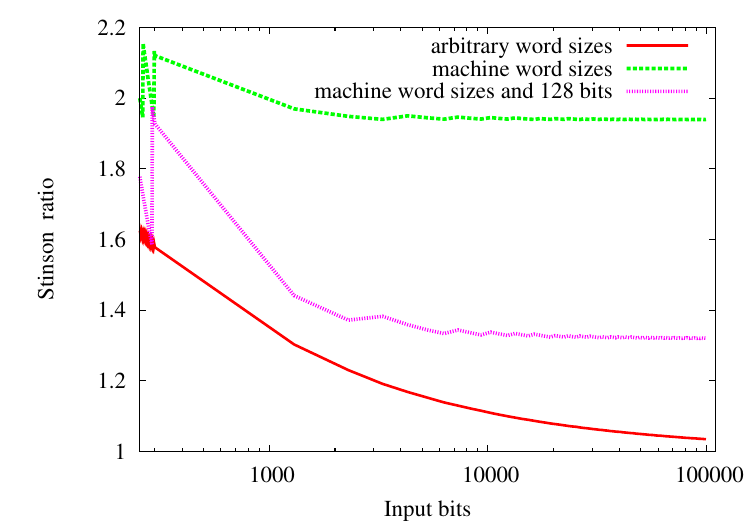}
\caption{\label{fig:stinsonlimit} For large inputs, \textsc{Multilinear} 
requires an almost optimal number of random bits when arbitrary word sizes ($K$) are allowed.
It has lower efficiency when the word size is constrained. 
The plot was generated for 32-bit hash values ($z=32$).
}
\end{figure}
 
We can also choose the word size ($K$) to optimize speed. On a 64-bit processor, setting $K=64$ would make sense.  
We can compare this default with two  
alternatives: 
\begin{enumerate}
 \item We can try to support much larger words using
fast multiplication algorithms such as Karatsuba's.
We could merely try to minimize the number of random bits. However, this ignores the growing
computation cost of multiplications over many bits,
 e.g., 
 Karatsuba algorithm is in % "Karatsuba algorithm" is much more common on Google Scholar than Karatsuba's algorithm
$\Omega(K^{1.58})$.   %was n^1.58 but we have used K before and after this...
 For simplicity,
 suppose that the cost 
 of $K$-bit multiplication 
 costs $K^a$ time for $a>1$.
To hash $M$~bits,
 we require $\lceil M/L \rceil $~multiplications with \textsc{Multilinear}.
 When we have long strings (i.e., $M\gg L$), we  %added i.e. to be verbose
 can simplify $\lceil M/L \rceil \approx M/L$.
 If we desire $z$-bit hash values, 
 then we need to use multiplication on $K=z+L-1$~bits.
 Thus, the processing cost can be (roughly) approximated as
 $\frac{M (z+L-1)^a}{L}$. 
 Starting from $L=1$, this function
 initially decreases to a minimum at 
\begin{equation} L=\frac{z-1}{a-1}
\label{eqn:L-minimizing-cost}
\end{equation} 
  before increasing again as $L^{a-1}$.
 (When $a=1.5$ and $z=32$, we have $\frac{z-1}{a-1}=62$.)
 See Fig.~\ref{fig:multiplicationcost}.
Hence, while we can  minimize
 the total number of random bits by using many bits per character ($L$ large),
 we may want to keep $L$ relatively small to take into account the superlinear
 cost of multiplications.

 \item We can support  128-bit words on a
 64-bit processor, with  some overhead. 
(Recent GNU GCC~compilers have the \texttt{\_\_uint128} type, as a C-language extension.)  A single 128-bit multiplication may require up
 to three 64-bit multiplications. However, it processes more data: with $z=32$ hashed bits, 
 each 128-bit multiplication hashes 97~input bits. 
 Comparatively, setting
 $K=64$, we require a single 64-bit multiplication, but we process only 
 32~bits of data. 
(Formally, we could process 33~bits of data, but
 for convenient implementation, we process data in powers of two.)
 Hence, it is unclear which approach is faster: three 64-bit multiplications and 128~bits of
 random data to process 97~input bits, or a single 64-bit multiplication and 64 bits~of random data,
 to process  33~input bits. However, the 128-bit approach will use 33\% fewer random bits.
 Going to 256-bit word sizes would only reduce the number of random bits by 14\%: using
 larger and larger words leads to diminishing returns.
 \end{enumerate}

We assess these two alternatives experimentally in \S~\ref{sec:experimentswithgmp}.

\begin{figure}
\centering
\includegraphics[width=0.95\columnwidth]{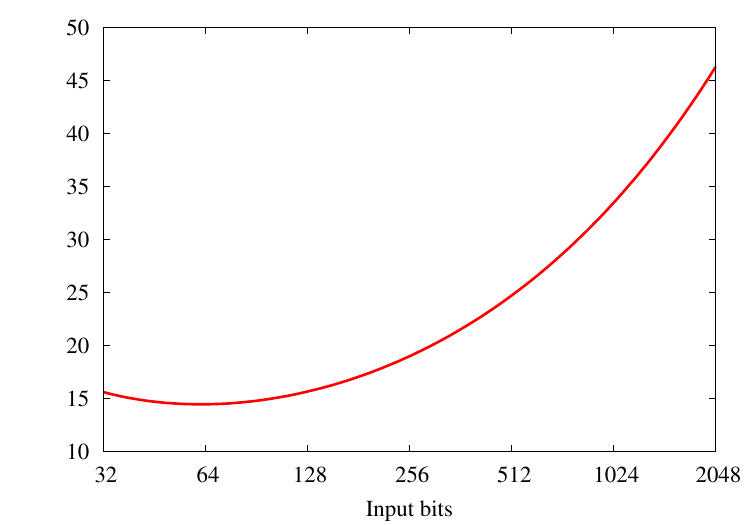}
\caption{\label{fig:multiplicationcost} Modeled computational
cost per bit as a function of the number of bits per character ($\frac{(z+L-1)^a}{L}$)
for 32-bit hashing values ($z=32$) and $a=1.5$. 
}
\end{figure}

\section{Fast Multilinear with carry-less multiplications}
\label{sec:carryless}

To help support fast operations over binary finite fields ($GF(2^L)$), 
AMD and Intel introduced the Carry-less Multiplication (CLMUL) instruction set~\citep{Gueron2010549}. 
%Given   intentional verbosity
If we are given
the binary representations of two numbers, $a=\sum_{i=1}^L a_i 2^{i-1}$ and $b=\sum_{i=1}^L b_i 2^{i-1}$, the carry-less multiplication is given
by $c=\sum_{i=1}^{2L-1} c_i 2^{i-1},$ where $c_i = \bigoplus_{j=1+i}^{2L-1} a_j b_{j-i}$.
Henceforth, we write $a \star b = c$.
If we represent the two $L$-bit integers $a$ and $b$ as polynomials in
GF(2)[$x$], then the carry-less multiplication is equivalent to the usual
polynomial multiplication:
\begin{eqnarray*}
\left (\sum_{i=1}^L a_i x^{i-1}\right )\left (\sum_{i=1}^L b_i x^{i-1}\right ) = \sum_{i=1}^{2L-1} c_i x^{i-1}.
\end{eqnarray*}

With a fast carry-less computation, we can compute Multilinear efficiently. Given any irreducible polynomial $p(x)$ of degree $L$, the field GF(2)[$x$]/$p(x)$ is isomorphic to GF$(2^L)$.
Hence, we want to compute $
h(s)=m_1+\sum_{i=1}^n m_{i+1} s_i$ over GF(2)[$x$]/$p(x)$. 
%(Similarly, we can use 
%Equation~\ref{mhm-gf} to reduce the number of multiplications by half: we discuss this possibility at the end of this section.)
Computing all multiplications over GF(2)[$x$]/$p(x)$ would still be expensive given fast carry-less multiplication. Instead,  we first compute $m_1+\sum_{i=1}^n m_{i+1} s_i$ over GF(2)[$x$] and then return the remainder of the 
division of the
final result by $p(x)$. 
Indeed, think of the values $m_1, m_2, \ldots$ and $s_1, s_2, \ldots$ as polynomials of
degree at most $L$ in GF(2)[$x$]. Each of the $n$~multiplications in GF(2)[$x$] is equivalent to a carry-less multiplication over $L$-bit integers 
which results in a $2L-1$-bit value. Similarly, each of the $n$~additions in GF(2)[$x$] is an exclusive-or operation.
That is, we want to compute the $2L-1$-bit integer  \begin{eqnarray}\bar h(s)=m_1 \oplus\left ( \bigoplus_{i=1}^n  m_{i+1}\star s_i \right ).\label{eqn:gfmultilinear}\end{eqnarray}
Finally, considering $\bar  h(s)$ as an element of GF(2)[$x$], noted $q(x)$, we must compute $q(x)/p(x)$. The remainder (as an $L$-bit integer) is the final hash value $h(s)$.

If done naively, computing the remainder of the division by an irreducible polynomial
may remain relatively expensive, especially for short strings since they require few multiplications. 
A common technique to quickly compute the remainder is the Barrett reduction algorithm~\cite{Barrett:1987:IRS:36664.36688}. The adaptation of this reduction to 
GF(2)[$x$] is especially convenient~\cite{springerlink:10.1007/978-3-540-69499-1_7}
when we choose the irreducible polynomial $p(x)$ such that
$\textrm{degree}(p(x)-x^L)\leq L/2$, that is, when we can
write it as $p(x)= \sum_{i=0}^{\lfloor L/2 \rfloor} a_i x_i + x^L$. 
(There are such irreducible polynomials for $L\in\{1,2,\ldots,400\}$~\cite{irredpolyonline}
and we conjecture that such a polynomial can be found for any $L$~\citep{springerlink:10.1007/978-3-540-24633-6_16}.)
In this case, the remainder of $q(x)/p(x)$ is given by 
\begin{eqnarray*}
((((q \div 2^L) \star p ) \div 2^L ) \star p) \oplus q ) \mod 2^L
\end{eqnarray*}
 where $q$ and $p$ are the $2L-1$-bit and $L+1$-bit integers representing $q(x)$ and $p(x)$.
 (See~\ref{appendix:implementationscarryless} for implementation details.)  We expect the two  carry-less multiplications  to account for most of the running time of the reduction.
 Yet we expect that even a fast implementation
of the Barrett reduction is much slower than merely
selecting the left-most $L$~bits as in \textsc{Multilinear}.

Unfortunately, in its current  Intel implementation, carry-less multiplications
have significantly reduced throughput compared to 
regular integer multiplications. Indeed,  with pipelining, it is
possible to complete one regular multiplication per cycle, but
only one carry-less multiplication every 8~cycles~\citep{inteloptimization}.
However, using a result from \S~\ref{sec:multilinear}, we can reduce the number of multiplications by half if we compute 
 \begin{eqnarray*}\bar h(s)=m_1 \oplus \left ( \bigoplus_{i=1}^{n/2}  (m_{2i} + s_{2i-1}) \star (m_{2i+1} + s_{2i})\right )\end{eqnarray*}
instead. 
(Henceforth, we refer to this last variation as \textsc{GF Multilinear-HM},
whereas
we refer to  the version based on Equation~\ref{eqn:gfmultilinear} as \textsc{GF Multilinear}.)

However, irrespective of its speed, the carry-less approach might still be preferable to the schemes described in \S~\ref{sec:technical} (e.g., \textsc{Multilinear}) because fewer random bits are required. Indeed,  to generate $L$-bit hash values from $n$-character strings, the carry-less scheme uses $(n+1)L$~random bits, whereas \textsc{Multilinear} requires $2L + n (2L-1)$~random bits. 

\section{Experiments}
\label{sec:experiments}

Our experiments show the following results:
\begin{itemize} 

\item It is best to implement \textsc{Multilinear}  with loop unrolling.
With this optimization,  \textsc{Multilinear} is just
as fast (on  Intel processors) as  \textsc{Multilinear-HM}, even though it has twice the number
of multiplications. 
 In general, processor microarchitectural
differences are important in determining which
method is fastest.
(\S~\ref{sec:mainexperiments})

\item In the absence of processor support for carry-less multiplication (see \S~\ref{sec:carryless}), hashing using Multilinear
 over binary finite fields is an order of magnitude slower than \textsc{Multilinear}
even when using a highly optimized library.  
(\S~\ref{sec:experimentsonfinitefields})

\item  Even with hardware support for carry-less multiplication, hashing using Multilinear
 over binary finite fields remains several times slower than \textsc{Multilinear}.  
(\S~\ref{sec:carryless-experiments})

\item Given a 64-bit processor, it is noticeably faster to use a word size of 64~bits even though a larger
word size (128~bits) uses fewer random bits (33\% less). 
Use of  multiprecision arithmetic libraries can further reduce the overhead from accessing random bits,
but they also fail to be  competitive with respect to speed, though they can 
halve 
the number of required random bits. 
 (\S~\ref{sec:experimentswithgmp})

\item 
\textsc{Multilinear} is generally faster than popular string-hashing algorithms.
(\S~\ref{sec:revisit})

\end{itemize}

\subsection{Experimental setup}

We  evaluated the  hashing functions on the platforms shown in Table~\ref{tbl:platforms}.
Our software is freely available online~\cite{oursoftware}. 
For Intel and AMD processors, we used the processor's time stamp counter (\texttt{rdtsc} instruction~\cite{intelbenchmark})
to estimate the number of cycles required to hash each
 byte. Unfortunately, the ARM instruction set
does not provide access to such a counter. Hence, 
for ARM processors (Apple~A4 and Nvidia Tegra),
we estimated the number of cycles required by dividing the wall-clock time
by the documented processor clock rate (1\,GHz).

\begin{table*}\caption{\label{tbl:platforms}Platforms used.}\centering
\begin{minipage}{1.6\columnwidth} %needed for footnotes
\centering
\singlespacing
%\small
\begin{tabular}{llll}\hline
Processor & Bits & GCC version & Flags, besides -O3 -funroll-loops\\ \hline
\multicolumn{4}{l}{64-bit processors} \\
\hline
Intel Core 2 Duo      & 64 &GNU GCC 4.6.2 
                                               & -march=core2  -mno-sse2
\\
Intel Xeon X5260      & 64 & GNU GCC 4.1.2     & -march=nocona\\
Intel Core i7-860 & 64 & GNU GCC 4.6.2     & -march=corei7 -mno-sse2 \\
Intel Core i7-2600 & 64 & GNU GCC 4.6.3     & -march=corei7-avx -ftree-no-vectorize\\% -march=corei7 -mno-sse2
Intel Core i7-2677M
                   & 64 & GNU GCC 4.6.2     & -march=corei7 -mno-sse2 \\
AMD Sempron 3500+     & 64 & GNU GCC 4.4.3     & -march=k8 -mno-sse2\\   
AMD  V120 
              & 64 & GNU GCC 4.4.3     & -march=amdfam10 -mno-sse2\\
AMD FX8150 & 64 &  GNU GCC 4.6.3     & -march=bdver1 -ftree-no-vectorize\\
\hline
\multicolumn{4}{l}{32-bit processors} \\
\hline
Intel Atom N270       & 32 & GNU GCC 4.5.2     & -march=atom\\ 
Apple A4              & 32 & GNU GCC 4.2.1     & -march=armv7  \\
Nvidia Tegra~2        & 32 & GNU GCC 4.4.3\footnote{From the Android NDK, configured for the android-9 platform, and used on a Motorola XOOM.}     & \\
VIA Nehemiah          & 32 & GNU GCC 3.3.4     & -march=i686 \\ 
\hline
\end{tabular}
\end{minipage}
\end{table*}

For the 64-bit machines, 64-bit executables were produced and all
operations were executed using 64-bit
arithmetic except where noted. 
All timings were repeated three times.  
For the 32-bit processors, 
 32-bit operations were used to process 16-bit
strings.  Therefore, results between 32- and 64-bit processors are not
directly comparable.
Good optimization flags were found by a trial-and-error process.  We note that using profile-guided optimizations
did not improve this code any more than simply enabling loop unrolling (\mbox{\texttt{-funroll-loops}}).  With (only) versions 4.4 and higher
of GCC, it was sometimes important for speed  to forbid use of SSE2 instructions when compiling \textsc{Multilinear} and \textsc{Multilinear-HM} (hence the \texttt{-mno-sse2} flags in Table~\ref{tbl:platforms}). Moreover, we determined that  versions 4.6 and 4.7 of GCC gave incorrect compiled code when vectorizing \textsc{Multilinear} and related functions: as an alternative to the \texttt{-mno-sse2} flag, we found that the \texttt{-fno-tree-vectorize} flag was sufficient to ensure %insure
correct results.   
% owen: while most style geeks seem to be okay using 'insure' as a synonym
% for 'assure', some say we'd want 'ensure'.  Since we ensure elsewhere, I
% changed this...

We found that the speed 
is insensitive to the content of the string: in our tests we hashed randomly generated strings.
We reuse the same string for all tests. Unless otherwise
specified, we hash randomly generated
32-bit strings of 1024~characters. 

In addition to \textsc{Multilinear} and
\textsc{Multilinear-HM}
we also implemented \textsc{Multilinear}~(2-by-2) which is
merely \textsc{Multilinear} with 2-by-2 loop unrolling.
(See~\ref{appendix:implementations}
for representative C implementations.)

Our timings should 
represent the best possible performance: the performance
of a function may degrade~\citep{Greenangalois} when it is included in an application because
of bandwidth and caching.

\subsection{Reducing the multiplications or unrolling may fail to improve the speed}
\label{sec:mainexperiments}
We ran our experiments over both the 32-bit and 64-bit processors.
For the 32-bit processors, we generated both 16-bit and 32-bit hash values.
Our experimental results are given in Table~\ref{table:results}.

\begin{table*}
\centering 
\begin{minipage}{1.6\columnwidth}
\centering
\singlespacing \caption{\label{table:results}Estimated CPU cycles per byte for fast Multilinear hashing
}
\begin{tabular}{c|ccc}\hline
                   & \textsc{Multilinear} &   2-by-2 & \textsc{Multilinear-HM}\\ \hline
\multicolumn{4}{l}{64-bit processors and 32-bit hash values and characters} \\
\hline
Intel Core 2 Duo
                  &           0.54        &    \textbf{0.52}   &        \textbf{0.52} \\

Intel Xeon X5260   &      \textbf{0.50}             &    \textbf{0.50}   &       \textbf{0.50} \\
Intel Core i7-860    &      \textbf{0.42}            &    \textbf{0.42}   &        \textbf{0.42} \\
Intel Core i7-2600    &      0.35            &    \textbf{0.27}   &        0.28 \\ 
Intel Core i7-2677M   &      0.25            &  \textbf{0.20} & \textbf{0.20} \\
AMD Sempron 3500+
                   &      0.63             &    0.60  &        \textbf{0.40} \\
AMD V120
                   &      0.63             &    0.63  &        \textbf{0.40}\\
AMD FX8150        &    0.88     
  &    1.00      &  \textbf{0.51}      \\   
\hline
\multicolumn{4}{l}{64-bit arithmetic and 32-bit hash values and characters on 32-bit processors} \\
\hline
Intel Atom N270   &                4.2  &      4.2                & \textbf{3.6}    \\
Apple A4           &      3.0                &  \textbf{2.7}        & 3.3           \\ 
Nvidia Tegra~2   &         3.3                 &  \textbf{3.0}       &  4.9           \\   
VIA Nehemiah     &        12                 &   12               &   \textbf{8.2}\\
\hline
\multicolumn{4}{l}{32-bit processors and 16-bit hash values and characters} \\
\hline
Intel Atom N270     & \textbf{2.1}  & 3.5 & 2.6\\
Apple A4            &      1.9                 &  2.6        & \textbf{1.7}           \\ 
Nvidia Tegra~2   &         \textbf{1.8}                 &  2.2       &  1.9           \\   
VIA Nehemiah      &       5.2          &   5.2  &        \textbf{3.6} \\
\hline
\end{tabular}
\end{minipage}
\end{table*}

 We see that
over 64-bit Intel processors (except for the i7-2600), there is little difference between \textsc{Multilinear}, \textsc{Multilinear}~(2-by-2)
and \textsc{Multilinear-HM}, even though \textsc{Multilinear} and \textsc{Multilinear}~(2-by-2) have twice
the number of multiplications. We believe that Intel processors use aggressive
pipelining techniques well suited to these computations. 
On the AMD processors,
\textsc{Multilinear-HM} is the clear winner, being at least 33\% faster.

For the 32-bit processors, we get vastly different results depending
on whether we generate 16-bit or 32-bit hash values. 
\begin{itemize}
\item As expected, it is roughly
twice as expensive to generate 32-bit hash values than to generate 16-bit values. 
\item For
the VIA processor, \textsc{Multilinear-HM} is 45\% faster than 
\textsc{Multilinear} and \textsc{Multilinear}~(2-by-2). We suspect
that the computational cost is tightly tied to the number of multiplications. 
\item When the 32-bit ARM-based processors generate 32-bit hash values,
\textsc{Multilinear}~(2-by-2) is preferable. 
We are surprised that 
\textsc{Multilinear-HM} is the worse choice. We believe
that this is related to the presence of a multiply-accumulate
instruction in ARM processors.
When  generating 16-bit hash values, \textsc{Multilinear}~(2-by-2) becomes the worse
choice. There is no significant benefit to using \textsc{Multilinear-HM} as opposed
to \textsc{Multilinear}. 
\item The Intel Atom processor benefits from \textsc{Multilinear-HM} when generating 32-bit 
hash value, but  \textsc{Multilinear} is preferable to generate 16-bit hash values.
As with 
the ARM-based processors,  \textsc{Multilinear}~(2-by-2) is a poor choice 
 for generating 16-bit hash values.
\end{itemize}

\subsection{Binary-finite-field 
libraries are not competitive}

\label{sec:experimentsonfinitefields}

We obtained the $\texttt{mp}\mathbb{F}_{b}$ library 
from INRIA\@.
This code is reported~\cite{mpfb} to be generally faster than popular alternatives
such as NTL 
and Zen, and 
our own tests 
found it to be more than twice as fast as Plank's library~\cite{plank2007fast}. 

We computed Multilinear in $GF(2^{32})$, using the version with half the number
of multiplications (see Equation~\ref{mhm-gf}) 
because the library does much more work in multiplication than addition.  Even so,
on a Core~2~Duo, hashing 32-bit strings of 1024~characters was an order of magnitude slower
than \textsc{Multilinear}: 
averaged over a million attempts, the code using  $\texttt{mp}\mathbb{F}_{b}$ required 
an average of 7.69\,$\mu s$ per string, 
compared with 0.78\,$\mu s$ 
for \textsc{Multilinear}. 
While  our implementation of  \textsc{Multilinear} uses twice as many random bits as Multilinear in $GF(2^{32})$, this gain is 
offset 
by the memory usage of
the finite-field library.

\subsection{Hardware-supported carry-less multiplications are not fast enough}
\label{sec:carryless-experiments}

Intel reports a throughput of  one carry-less product every 8~cycles~\citep{inteloptimization} on a processor such as the Intel Core i7-2600. Consider
\textsc{GF Multilinear-HM}: it uses one carry-less multiplication for every
two 32-bit characters. Hence, it requires at least 4~cycles to process each 
character. Therefore, in the best scenario possible,
\textsc{GF Multilinear-HM} will  be almost four   %four  -- why wasn't this more than 4 earlier: ie 8? % Too many numbers to keep straight.
 times slower than
\textsc{Multilinear-HM}  which requires only 
%0.5~cycle per 32-bit character (0.13~cycle per byte).     - old buggy number
1.1~cycles per 32-bit character (0.28~cycle per byte).

To assess the actual performance, we implemented 
both \textsc{GF Multilinear} and \textsc{GF Multilinear-HM} in C (\S~\ref{appendix:implementationscarryless}). 
We also implemented a variation on \textsc{GF Multilinear-HM} (henceforth \textsc{GF Multilinear-HM-Fast}) that loads data in blocks of four 32-bit integers.

\begin{itemize}
\item Of the Intel processors we tested, only the i7-2600 has support for the 
CLMUL instruction set.  
If we use  the flags
\texttt{-O3} \texttt{-funroll-loops} \texttt{-corei7-avx},
we get 9.6~CPU cycles per 32-byte character with \textsc{GF Multilinear}, 5.8~CPU cycles with \textsc{GF Multilinear-HM} and only 4.3~CPU cycles with \textsc{GF Multilinear-HM-Fast}.
That is 2.4~cycles, 1.5~cycles and 1.1~cycles per byte respectively: about 4--9~times slower than \textsc{Multilinear-HM} on the same platform (0.27~cycle per byte).
We might be able to improve our implementation. %: e.g., we expect that much time is spent 
%loading  data into XMM registers. 
However, the throughput of the carry-less multiplication limits the character throughput of \textsc{GF Multilinear} and \textsc{GF Multilinear-HM} to 8 and 4~cycles.
\item One of our AMD processors (AMD FX8150) also supports the CLMUL instruction set. 
 With the flags \texttt{-O3} \texttt{-funroll-loops} \texttt{-march=bdver1}, it fares slightly better than the Intel counterpart: 6.8~CPU cycles per 32-byte character with \textsc{GF Multilinear}, 4.1~CPU cycles with \textsc{GF Multilinear-HM}  and only 3.5~CPU cycles with \textsc{GF Multilinear-HM-Fast}. 
That is 1.7~cycles, 1.0~cycle and 0.9 cycle per byte  respectively. 
 However, the same AMD processor can process each 32-bit character in 2.05~cycles (0.51~cycle per byte) with \textsc{Multilinear}~(2-by-2). Hence,  \textsc{Multilinear} (2-by-2) is 
%3.5~times : Owen: now it is surely 4.2 / 2.05 which is about 2.    
almost 2~times faster than the best carry-less approach (\textsc{GF Multilinear-HM-Fast}).
\end{itemize}
Overall, the hardware-supported carry-less Multilinear schemes are several times slower.
On the bright side, \textsc{GF Multilinear} and \textsc{GF Multilinear-HM} require half the number of random bits.

\begin{figure}
\begin{centering}
\includegraphics[width=.95\columnwidth]{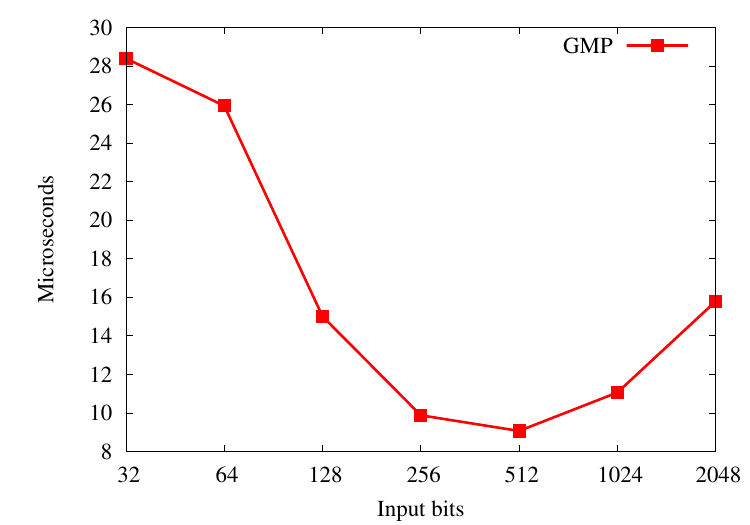}\\
\end{centering}
\caption{\label{fig:stinson-vs-multoverhead} Microseconds to hash 4\,kB using various word sizes and GMP. 
}
\end{figure}

\subsection{The sweet-spot for multiprecision arithmetic is not sweet enough}
\label{sec:experimentswithgmp}
To implement the techniques of \S~\ref{sec:word-size}, we used the GMP
library~\citep{gmpwebsite} version~5.0.2 to
implement \textsc{Multilinear}~(2-by-2).  As usual, we are hashing 4\,kB
of data, though data to be hashed 
are read in large chunks (up to
2048 bits).  The hash output is always 32~bits ($z=32$).  Results show
a benefit as the chunk size $L$ goes from 32 to 512 bits, but thereafter
the situation degrades.  See Fig.~\ref{fig:stinson-vs-multoverhead}.
In the best
case, using 512-bit arithmetic, we require almost 13\,$\mu$s per
string on a Core~2~Duo platform. 
For comparison, we find that the fewest
random bits would be needed when $L=1024$ (\S~\ref{sec:word-size}).
As expected, the running time is minimized for a lower value of
$L$ to account for the superlinear % running -- seems implicit
cost of multiplication.

Unfortunately, we can do 12~times better without the GMP library
(0.78\,$\mu s$ for 64-bit \textsc{Multilinear}), 
so it is not practical to use 512-bit arithmetic---even though it uses fewer random bits (nearly half as many).

%fig 3 was here

\begin{table*}%[tb]  -- style does not seem to accept this
\centering
\begin{minipage}{1.1\columnwidth} %need minipage for footnotes
\singlespacing
\caption{\label{table:results2}A comparison of estimated CPU cycles per byte between fast Multilinear hashing and common hash functions
}
\begin{tabular}{c|ccc}\hline
                   & Rabin-Karp  & SAX   & best Multilinear\\ \hline
\multicolumn{4}{l}{32-bit hash values and characters  on 64-bit processors}\\
\hline
Intel Core 2 Duo
                   &       1.3   & 1.3  &     \textbf{0.52}                    \\
Intel Xeon X5260   &       1.4   & 1.6   &     \textbf{0.50}                  \\
Intel Core i7-860       &       1.4   & 1.6   &     \textbf{0.42}                     \\
Intel Core i7-2600       &       0.89   & 1.1   &     \textbf{0.27}                     \\
Intel Core i7-2677M   &      0.64            &  0.82 & \textbf{0.20} \\
AMD Sempron 3500+  &       1.0   & 1.5   &     \textbf{0.40}                  \\
AMD  V120          &       1.0   & 1.5   &     \textbf{0.40}                     \\
AMD FX8150       &        0.86        &  1.3      & \textbf{0.51} \\
\hline
\multicolumn{4}{l}{32-bit hash values and characters on 32-bit processors} \\
\hline
Intel Atom N270      &   \textbf{1.1}    & 2.0     &       4.2    \\
Apple     A4         &   \textbf{0.88}    & 1.2     &       2.7                      \\ 
Nvidia Tegra~2       &  \textbf{0.85}     & 1.2   &     3.0                        \\ 
VIA Nehemiah           &  \textbf{2.0}   & 3.0     &    8.2                     \\
\hline
\multicolumn{4}{l}{16-bit hash values and characters on 32-bit processors} \\
\hline

Intel Atom N270        &  \textbf{2.1}& 4.1      &    2.2                               \\

Apple     A4           &       \textbf{1.8}   & 2.1      &     \textbf{1.8}                      \\ 
Nvidia Tegra~2         & \textbf{1.6}& 2.4      &     1.7           \\ 
VIA Nehemiah           &       5.0   & 6.6     &     \textbf{3.6}                     \\\hline
\end{tabular}
\end{minipage}
\end{table*}

As a lightweight alternative to a multiprecision library, we
experimented with the \texttt{\_\_uint128} type provided as a GCC extension for
64-bit machines.  We used 128-bit random numbers and processed three
32-bit words with each 128-bit operation.  Since \texttt{\_\_uint128}
multiplications are more expensive than \texttt{\_\_uint128} additions, we
tested the \textsc{Multilinear-HM} scheme.  On the Core~2~Duo machine, the result was
38\% slower than \textsc{Multilinear}~(2-by-2) using 64-bit operations.
This poor results is mitigated by the fact that we use 128~random bits per 96~input bits, versus 64~random bits per 32~input bits (a saving of nearly 33\% for long strings).
Investigation using hardware performance counters showed many
``unaligned loads'' from retrieving 128-bit quantities when we step
through memory with 96-bit steps.  
To reduce this, we tried processing 
only two 32-bit words with each 128-bit operation, since we
retrieved aligned 64-bit quantities.  However, the result was 
61\% slower than \textsc{Multilinear}~(2-by-2) using 64-bit operations.

\subsection{Strongly universal hashing is inexpensive?}
\label{sec:revisit}

In a survey, Thorup~\cite{338597} concluded that strongly universal hash families
are just as efficient, or even more efficient, than popular
hash functions with weaker  theoretical guarantees.
However, he only considered 32-bit integer inputs. We consider strings.

In Table~\ref{table:results2}, we compare the fastest Multilinear (\textsc{Multilinear-HM})
with two non-universal fast 32-bit string hash functions, Rabin-Karp~\cite{karp1987erp}
and SAX~\cite{ramakrishna1997performance}.
(They are similar to hash functions found in 
programming languages such as Java or Perl.)
 Even though these functions were designed
for speed and lack strong theoretical guarantees, they are far slower than \textsc{Multilinear} on desktop processors (AMD and Intel).
Only for ARM processors (Apple     A4  and Nvidia Tegra~2) with 32-bit hash values are they much faster. We suspect that this good result on ARM processors is due 
to the  multiply-accumulate instruction. Clearly, such a 
multiply-accumulate operation greatly benefits simple hashing functions such as Rabin-Karp and SAX. 

%table 3 had been here

Crosby and Wallach~\cite{Crosby:2003:DSV:1251353.1251356} showed that almost universal hashing could be as fast as common deterministic hash functions.
One of their most competitive almost universal
 schemes is
due to Black et al.~\cite{703969}. 
Their  
fast family of hash functions 
is
called NH: 
\begin{eqnarray*}\begin{split}
h(s)=  & \sum_{i=1}^{n/2} \left  (m_{2i-1}+ s_{2i-1} \mod {2^{L/2}} \right )\\ & \quad \times \left (m_{2i}+ s_{2i} \mod {2^{L/2}} \right) \mod{2^L}. \end{split}
\end{eqnarray*}
NH is almost universal over fixed-length strings, or over variable-length strings that do not end with the zero character; we can apply it to strings having odd length by appending a character with value zero. 
It fails to be uniform: the value \begin{eqnarray*}(m_{1}+ s_{1} \mod {2^{L/2}})(m_{2}+ s_{2} \mod {2^{L/2}}) \end{eqnarray*} is zero whenever either $m_{1}+ s_{1} \mod {2^{L/2}}$
is zero %intentionally verbose
 or $m_{2}+ s_{2} \mod {2^{L/2}}$ is zero, which occurs with probability $\frac{2^{L/2+1}-1}{2^{L}}>\frac{1}{2^{L}}$ over all possible values of $m_1,m_2$. 
Moreover, the least significant bits may fail to be almost universal: e.g., for $L=6$, there are 96~pairs of distinct strings 
colliding with probability~1 over the least two significant bits. 
When processing 32-bit characters, it
generates 64-bit hash values with collision
probability of $1/2^{32}$.
Hence, in our tests over 32-bit characters, NH generates 64-bit hash values whereas the Multilinear families generate 32-bit hash values, but both have a collision probability bounded by $1/2^{32}$. 
Thus,  while NH saves memory because it uses nearly half the number of random bits compared to our fast Multilinear families, Multilinear families may save memory in a system  that stores hash values
because their  hash values have half the number of bits. 
 Table~\ref{table:results3} shows that
  the 64-bit NH on 64-bit processors runs at about the same speed as the best Multilinear on most processors.
Only on two
  Intel Core i7 processors (2600 and 2677M), NH's running time is 60\% of Multilinear's when we enable SSE support. On only one AMD processor (AMD~FX8150), NH is 3 times faster. 
  In other words, sacrificing theoretical guarantees 
does not
always translate into better speed.

\section{Discussion}

Overall, these numbers indicate that strongly universal string hashing
is computationally inexpensive on 
most Intel and AMD processors.
To get good results with older 64-bit and AMD processors, we recommend 
the use of  %intentional verbosity to keep the algo names and cpu 
%names unbroken for this key recommendation
\textsc{Multilinear-HM}. On more recent Intel  processors (i7-2600 and i7-2677M), \textsc{Multilinear}~(2-by-2) is just as fast.

Unfortunately---over long strings---strongly universal hashing requires many random numbers. Generating and storing these random numbers is the main difficulty. 
Whether this is a problem depends on the memory available, 
the CPU cache, the application workload and the length of the strings. 
(Intel researchers reported the
generation of true random numbers
in hardware at high speed (4\,Gbps)~\cite{srinivasan20094gbps}.)
In practice, unexpectedly long strings may require the generation of
new random numbers while hashing a given string~\cite{Crosby:2003:DSV:1251353.1251356}. This overhead should
be relatively inexpensive if we know the length of each string
before we process~it. 

\begin{table}%[tb]   Placement is now ugly, but maybe not as bad as having
%table 4 on page 11 and table 3 on page 12, which is what we had
\begin{minipage}{\columnwidth}\singlespacing
\caption{\label{table:results3}A comparison of estimated CPU cycles per byte between fast Multilinear hashing and the
almost universal hash function NH from Black et al.~\cite{703969} for 32-bit hash values using 64-bit arithmetic. 
When running NH~tests, we remove the \texttt{-mno-sse2} and \texttt{-fno-tree-vectorize} flags, where %it is 
present, to get better results.
}
\centering
\begin{tabular}{c|cc}\hline
                   &  NH~\cite{703969}  & best Multilinear\\ \hline
Intel Core 2 Duo
                   &      0.53 &     \textbf{0.52 }                    \\
Intel Xeon X5260   &         \textbf{0.50}                  &     \textbf{0.50}                  \\
Intel Core i7-860       &     \textbf{0.42}                     &     \textbf{0.42}                     \\
Intel Core i7-2600       &   \textbf{0.16} &     0.27                    \\
Intel Core i7-2677M    &      \textbf{0.12} & 0.20 \\
AMD Sempron 3500+  &         \textbf{0.38} &   0.40                  \\
AMD  V120          &           \textbf{0.38} &0.40                     \\
AMD FX8150         &         \textbf{0.17}                 &      0.51 \\
\hline
\end{tabular}
\end{minipage}
\end{table}

\section{Conclusion}

Over moderately long 32-bit strings ($\approx$1024~characters), current
desktop processors can achieve strongly universal hashing with no more than 0.5~CPU cycle per byte, and 
sometimes as little as 0.2~CPU cycle per byte. Meanwhile, at least twice as many cycles are required for
Rabin-Karp hashing even though it is not even universal. 

While it uses half the number of multiplications, we have found that \textsc{Multilinear-HM} is often no faster than \textsc{Multilinear} on Intel processors.
Clearly, Intel's pipelining architecture has some benefits. 
For %older 
AMD processors,
\textsc{Multilinear-HM} is faster ($\approx$ 33\%), as expected because it uses fewer multiplications. 
%However a more recent AMD processor favors \textsc{Multilinear} over  \textsc{Multilinear-HM}.
 Yet  another alternative,  \textsc{Multilinear} (2-by-2), was slightly faster ($\approx$  15\%) for 32-bit hashing 
on the mobile ARM-based
processors even though it requires twice as many multiplications as \textsc{Multilinear-HM}. These mobile ARM-based processors also computed  
32-bit Rabin-Karp hashing with fewer cycles per byte 
than many desktop processors. We believe that this is related to the
presence of a multiply-accumulate in the ARM instruction set.

%\section*{Acknowledgments}
\ack

The authors are grateful for related online discussions with N.~Kurz,
P.~L.~Bannister, V.~Khare, V.~Venugopal, and J.~S.~Culpepper. 
% Computer Journal wants separate section
%This work is supported by NSERC grant~261437. 
P.~Crowley contributed a correction regarding the extension of \textsc{Multilinear-HM} to variable-length strings.

%Computer Journal wants a separate section
\section*{Funding}
This work was supported by  the Natural Sciences and Engineering Research Council of Canada [261437].

%%%%%%% computer journal
\bibliographystyle{compj}
\bibliography{truncated} 

\begin{thebibliography}{99}

\bibitem{338597}
Thorup, M. (2000) Even strongly universal hashing is pretty fast.
\newblock {\em Proc. 11th ACM-SIAM Symp. on Discrete Algorithms},  San
  Francisco, CA,  9--11 January,  pp. 496--497. Society for Industrial and
  Applied Mathematics, Philadelphia.

\bibitem{Ding:2011:FSI:1938545.1938550}
Ding, B. and K\"{o}nig, A.~C. (2011) Fast set intersection in memory.
\newblock {\em Proc. VLDB Endow.}, {\bf  4}, 255--266.

\bibitem{Cormode:2010:MFF:1731351.1731356}
Cormode, G. and Hadjieleftheriou, M. (2010) Methods for finding frequent items
  in data streams.
\newblock {\em VLDB J.}, {\bf  19}, 3--20.

\bibitem{Gibbons2001}
Gibbons, P.~B. and Tirthapura, S. (2001) Estimating simple functions on the
  union of data streams.
\newblock {\em Proc. 13th ACM Symp. on Parallel Algorithms and Architectures},
  Heraklion, Crete Island, Greece,  4--6 July,  pp. 281--291. ACM, New York.

\bibitem{BarYossef2002}
Bar-Yossef, Z., Jayram, T.~S., Kumar, R., Sivakumar, D., and Trevisan, L.
  (2002) Counting distinct elements in a data stream.
\newblock {\em Proc. 6th Int. Workshop on Randomization and Approximation
  Techniques},  Cambridge, MA,  13--15 September LNCS 2483,  pp. 1--10.
  Springer-Verlag, Berlin.

\bibitem{pagh2004cuckoo}
Pagh, R. and Rodler, F. (2004) {Cuckoo hashing}.
\newblock {\em J. Algorithms}, {\bf  51}, 122--144.

\bibitem{Dietzfelbinger:2009:RUC:1496770.1496857}
Dietzfelbinger, M. and Schellbach, U. (2009) On risks of using cuckoo hashing
  with simple universal hash classes.
\newblock {\em Proc. 13th ACM-SIAM Symp. on Discrete Algorithms},  New York,
  NY,  4--6 January,  pp. 795--804. Society for Industrial and Applied
  Mathematics, Philadelphia.

\bibitem{Aumuller:2012:EEH:2404160.2404171}
Aum\"{u}ller, M., Dietzfelbinger, M., and Woelfel, P. (2012) Explicit and
  efficient hash families suffice for cuckoo hashing with a stash.
\newblock {\em Proceedings of the 20th Annual European Conference on
  Algorithms},  Ljubljana, Slovenia,  10--12 September LNCS 7501,  pp.
  108--120. Springer, Berlin.

\bibitem{carter1979uch}
Carter, L. and Wegman, M.~N. (1979) Universal classes of hash functions.
\newblock {\em J. Comput. Syst. Sci.}, {\bf  18}, 143--154.

\bibitem{wegman1981new}
Wegman, M. and Carter, L. (1981) {New hash functions and their use in
  authentication and set equality}.
\newblock {\em J. Comput. Syst. Sci.}, {\bf  22}, 265--279.

\bibitem{Crosby:2003:DSV:1251353.1251356}
Crosby, S.~A. and Wallach, D.~S. (2003) Denial of service via algorithmic
  complexity attacks.
\newblock {\em Proc. 12th Conference on USENIX Security Symposium},
  Washington, DC,  4--8 August,  pp. 3--3. USENIX, Berkeley.

\bibitem{ocert2011003}
Klink, A. and W\"aldeAsanovic, J. (2011) Denial of service through hash table
  multi-collisions. Technical Report oCERT-2011-003. Open Source Computer
  Security Incident Response Team,  Pittsburgh, PA.

\bibitem{zobrist1970}
Zobrist, A.~L. (1970) A new hashing method with application for game playing.
  Technical Report~88. Computer Sciences Department, University of
  Wisconsin-Madison,  Madison, WI.
\newblock \url{http://www.cs.wisc.edu/techreports/viewreport.php?report=88},
  last checked on 2013-05-13.

\bibitem{zobrist1990new}
Zobrist, A. (1990) A new hashing method with application for game playing.
\newblock {\em ICCA J.}, {\bf  13}, 69--73.

\bibitem{dietzfelbinger1996universal}
Dietzfelbinger, M. (1996) Universal hashing and k-wise independent random
  variables via integer arithmetic without primes.
\newblock {\em Proc. 13th Symp. on Theoretical Aspects of Computer Science},
  Grenoble, France,  22--24 February LNCS 1046,  pp. 569--580. Springer-Verlag,
  Berlin.

\bibitem{etzel1999square}
Etzel, M., Patel, S., and Ramzan, Z. (1999) {SQUARE HASH}: Fast message
  authenication via optimized universal hash functions.
\newblock {\em Proc. 19th Int. Cryptology Conf. on Advances in Cryptology},
  Santa Barbara, CA,  15--19 August LNCS 1666,  pp. 234--251. Springer-Verlag,
  Berlin.

\bibitem{188765}
Stinson, D.~R. (1994) Universal hashing and authentication codes.
\newblock {\em Des. Codes Cryptogr.}, {\bf  4}, 369--380.

\bibitem{motzkin1955evaluation}
Motzkin, T.~S. (1955) Evaluation of polynomials and evaluation of rational
  functions.
\newblock {\em Bull. Amer. Math. Soc.}, {\bf  61}, 163.

\bibitem{703969}
Black, J., Halevi, S., Krawczyk, H., Krovetz, T., and Rogaway, P. (1999)
  {UMAC}: Fast and secure message authentication.
\newblock {\em Proc. 19th Int. Cryptology Conf. on Advances in Cryptology},
  Santa Barbara, CA,  15--19 August LNCS 1666,  pp. 216--233. Springer-Verlag,
  Berlin.

\bibitem{winograd1970number}
Winograd, S. (1970) On the number of multiplications necessary to compute
  certain functions.
\newblock {\em Comm. Pure Appl. Math}, {\bf  23}, 165--179.

\bibitem{Brent:2008:FMG:1789715.1789728}
Brent, R.~P., Gaudry, P., Thom\'{e}, E., and Zimmermann, P. (2008) Faster
  multiplication in {$GF(2)[x]$}.
\newblock {\em Proc. 8th Int. Conf. on Algorithmic Number Theory},  Banff,
  Canada,  17--22 May LNCS 5011,  pp. 153--166. Springer-Verlag, Berlin.

\bibitem{lemi:one-pass-journal}
Lemire, D. and Kaser, O. (2010) Recursive n-gram hashing is pairwise
  independent, at best.
\newblock {\em Comput. Speech Lang.}, {\bf  24}, 698--710.

\bibitem{Greenangalois}
Greenan, K.~M., Miller, E.~L., and Schwarz, T. (2007) Analysis and construction
  of {Galois} fields for efficient storage reliability. Technical Report
  UCSC-SSRC-07-09. University of California, Santa Cruz,  Santa Cruz, CA.

\bibitem{krovetz2001fast}
Krovetz, T. and Rogaway, P. (2001) Fast universal hashing with small keys and
  no preprocessing: The {PolyR} construction.
\newblock {\em Proc. 3rd Int. Conf. Information Security and Cryptology},
  Seoul, Korea,  8-9 December 2000 LNCS 2015,  pp. 73--89. Springer-Verlag,
  Berlin.

\bibitem{Cormen:2009:IAT:1614191}
Cormen, T.~H., Leiserson, C.~E., Rivest, R.~L., and Stein, C. (2009) {\em
  Introduction to Algorithms, Third Edition},  3rd edition. The MIT Press,
  Cambridge, MA.

\bibitem{patrascu2010power}
P\v{a}tra\c{s}cu, M. and Thorup, M. (2011) The power of simple tabulation
  hashing.
\newblock {\em Proc. 43rd ACM Symp. on Theory of Computing},  San Jose, CA,
  6-8 June,  pp. 1--10. ACM, New York.

\bibitem{1496842}
Thorup, M. (2009) String hashing for linear probing.
\newblock {\em Proc. 20th ACM-SIAM Symp. on Discrete Algorithms},  New York,
  NY,  4--6 January SODA~'09,  pp. 655--664. Society for Industrial and Applied
  Mathematics, Philadelphia.

\bibitem{Gueron2010549}
Gueron, S. and Kounavis, M. (2010) Efficient implementation of the {Galois
  Counter Mode} using a carry-less multiplier and a fast reduction algorithm.
\newblock {\em Inform. Process. Lett.}, {\bf  110}, 549 -- 553.

\bibitem{Barrett:1987:IRS:36664.36688}
Barrett, P. (1987) Implementing the {Rivest} {Shamir} and {Adleman} public key
  encryption algorithm on a standard digital signal processor.
\newblock {\em Proc. Advances in Cryptology},  Santa Barbara, CA,  18--22
  August 1986 LNCS 263,  pp. 311--323. Springer-Verlag, Berlin.

\bibitem{springerlink:10.1007/978-3-540-69499-1_7}
Kne\v{z}evi\'{c}, M., Sakiyama, K., Fan, J., and Verbauwhede, I. (2008) Modular
  reduction in {$GF(2^n)$} without pre-computational phase.
\newblock {\em Proc. 2nd Int. Workshop on Arithmetic of Finite Fields},  Siena,
  Italy,  6--9 July LNCS 5130,  pp. 77--87. Springer, Berlin.

\bibitem{irredpolyonline}
Arndt, J. (2003).
\newblock Minimal weight primitive polynomials over {$GF(2)$}.
\newblock \url{http://www.jjj.de/mathdata/minweight-primpoly.txt}, last checked
  on 2013-05-13.

\bibitem{springerlink:10.1007/978-3-540-24633-6_16}
Cohen, S. (2004) Primitive polynomials over small fields.
\newblock {\em Proc. 7th Int. Conf. Finite Fields and Applications},  Toulouse,
  France,  5--9 May 2003 LNCS 2948,  pp. 65--75. Springer, Berlin.

\bibitem{inteloptimization}
248966-026 (2011) {\em {Intel~64 and IA-32} Architectures Optimization
  Reference Manual}.
\newblock Intel~Corporation.
\newblock Santa Clara, CA.

\bibitem{oursoftware}
Kaser, O. and Lemire, D. (2012).
\newblock A variable-length string hashing library in {C}.
\newblock \url{http://code.google.com/p/variablelengthstringhashing/}, last
  checked on 2013-05-13.

\bibitem{intelbenchmark}
324264-001 (2010) {\em How to Benchmark Code Execution Times on {Intel} {IA-32}
  and {IA-64} Instruction Set Architectures}.
\newblock Intel~Corporation.
\newblock Santa Clara, CA.

\bibitem{mpfb}
Gaudry, P. and Thom\'e, E. (2007) The $\texttt{mp}\mathbb{F}_{b}$ library and
  implementing curve-based key exchanges.
\newblock {\em Proc. of SPEED: Software Performance Enhancement for Encryption
  and Decryption},  Amsterdam, The Netherlands,  11--12 June,  pp. 49--64.
  ECRYPT, Heverlee.

\bibitem{plank2007fast}
Plank, J.~S. (2007) Fast {Galois} field arithmetic library in {C/C++}.
  Technical Report UT-CS-07-593. University of Tennessee,  Knoxville, TN.
\newblock \url{http://web.eecs.utk.edu/~plank/plank/papers/CS-07-593.html},
  last checked on 2013-05-13.

\bibitem{gmpwebsite}
{Free Software Foundation} (2011).
\newblock The {GNU MP} {Bignum} library.
\newblock \url{http://gmplib.org}, last checked on 2013-05-13.

\bibitem{karp1987erp}
Karp, R.~M. and Rabin, M.~O. (1987) Efficient randomized pattern-matching
  algorithms.
\newblock {\em IBM J. of Res. Dev.}, {\bf  31}, 249--260.

\bibitem{ramakrishna1997performance}
Ramakrishna, M.~V. and Zobel, J. (1997) Performance in practice of string
  hashing functions.
\newblock {\em Proc. Int. Conf. on Database Systems for Advanced Applications},
   Melbourne, Australia,  1--4 April,  pp. 215--224. World Scientific Press,
  Singapore.

\bibitem{srinivasan20094gbps}
Srinivasan, S., Mathew, S., Erraguntla, V., and Krishnamurthy, R. (2009) A
  4~{Gbps} {0.57~pJ/bit} process-voltage-temperature variation tolerant
  all-digital true random number generator in 45~nm {CMOS}.
\newblock {\em Proc. 22nd Int. Conf. on VLSI Design},  New Delhi, India,  5--9
  January,  pp. 301--306. IEEE Computer Society, Washington.

\end{thebibliography}

\appendix

%if we cannot use minipages, we will have to remember to fiddle with this
%after every revision because we don't seem to be protected from stupid
%line and column breaking issues

\section{Implementations in C}  %Code in C might go w App B better
\label{appendix:implementations}
We implemented the following hash functions:
\begin{itemize} 
\item \textsc{Multilinear}:\\$h(s)=m_{1} + \sum_{i=1}^{n} m_{i+1} s_{i}$\\
\item \textsc{Multilinear}~(2-by-2):\\ $h(s)=m_{1} + \sum_{i=1}^{n/2} m_{2i} s_{2i-1}+ s_{2i} m_{2i+1}$\\ 
\item \textsc{Multilinear-HM}:\\ $h(s)=m_{1} + \sum_{i=1}^{n/2} (m_{2i} + s_{2i-1})(s_{2i} + m_{2i+1})$ 
\end{itemize}
For simplicity we assume that the number of characters ($n$) is even.
Following a common convention, we write the unsigned 32-bit and 64-bit integer data types as \texttt{uint32} and \texttt{uint64}.
The variable \texttt{p} is a pointer to the initial value of the string whereas \texttt{endp}
is a pointer to the location right after the last 32-bit character of the \emph{string}.
The variable \texttt{m} is a pointer to the 64-bit random numbers. (When using
63-bit random numbers as allowed by Theorem~\ref{thm:multilinearisstrong}, the right shifts should be by 31 instead. In practice, we use 64-bit numbers.) On some compilers and processors, it was useful to disable
SSE2: under GNU~GCC we can achieve this result with function attributes (e.g.\ by preceding the function declaration by \verb+__attribute__ ((__target__ ("no-sse2")))+).

\lstset{language=C, showstringspaces=false, breaklines,
 emph={p}, emphstyle={\textbf},
 emph={[2]endp}, emphstyle={[2]\textbf},
emph={[3]m}, emphstyle={[3]\textbf},
emph={[4]sum}, emphstyle={[4]\textbf},
emph={[5]uint32}, emphstyle={[5]\textbf},
emph={[6]uint64}, emphstyle={[6]\textbf},
emph={[7]__m128i}, emphstyle={[7]\textbf}
}

{\small\singlespacing\paragraph{\textsc{Multilinear}}~
\begin{lstlisting}
uint32 hash(uint64 * m, uint32 * p, 
    uint32 * endp) {
  uint64 sum = *(m++);
  for(;p != endp; ++m, ++p)
    sum+= *m * *p;
  return sum >> 32;
}
\end{lstlisting}}
 
{\singlespacing\small
\paragraph{\textsc{Multilinear}~(2-by-2)}~
\begin{lstlisting}
uint32 hash(uint64 * m, uint32 * p, 
      uint32 * endp) {
    uint64 sum = *(m++);
    for(; p != endp; m += 2, p += 2)
      sum += (*m * *p) + (*(m+1) * *(p+1));
    return sum >> 32;
}
\end{lstlisting}}

{\singlespacing\small
\paragraph{\textsc{Multilinear-HM}}~
\begin{lstlisting}      
uint32 hash(uint64 * m, uint32 * p, 
    uint32 * endp) {
  uint64 sum = *(m++);
  for(; p != endp; m += 2, p += 2) {
    sum += (*m + *p) * (*(m+1) + *(p+1));
  }
  return sum >> 32;
}
\end{lstlisting}}

\section{Code with CLMUL}
\label{appendix:implementationscarryless}
We implemented Multilinear in $GF(2^{32})$ in C using
the Carry-less Multiplication (CLMUL) instruction set~\citep{Gueron2010549} supported by recent  Intel and AMD processors. We also implemented the counterpart to
\textsc{Multilinear-HM} which executes half the number of multiplications.

We use the same conventions as in~\ref{appendix:implementations} regarding the
variables \texttt{p} and \texttt{m} except that the latter is a pointer to 32-bit random numbers. We wrote our
C programs using SSE~intrinsics: they are
functions supported by 
several major compilers (including GNU~GCC, Intel and Microsoft) that generate SIMD instructions.

The Barrett reduction algorithm is adapted from Kne\v{z}evi{\'c} et al.~\cite{springerlink:10.1007/978-3-540-69499-1_7}.
The variable \texttt{C} contains the chosen irreducible polynomial. We initialize it as
\begin{eqnarray*}
\texttt{C} & = & \texttt{\_mm\_set\_epi64x(0,1UL} \texttt{+~(1UL<<2)} \texttt{+~(1UL<<6)} \\ &&\texttt{+~(1UL<<7)}\texttt{+~(1UL<<32));}.
\end{eqnarray*}
\vspace*{-5ex}
%\begin{minipage}{\linewidth}
 {\singlespacing\small
 \paragraph{Barrett reduction}~
\begin{lstlisting}   
uint32 barrett( __m128i A) {
  __m128i Q1 = _mm_srli_epi64 (A, n);
  __m128i Q2 = _mm_clmulepi64_si128( Q1, C, 0x00);
  __m128i Q3 = _mm_srli_epi64 (Q2, n);
  __m128i f = _mm_xor_si128 (A, _mm_clmulepi64_si128( Q3, C, 0x00));
  return _mm_cvtsi128_si64(f) ;
}
\end{lstlisting}}
%\end{minipage}

%\begin{minipage}{\linewidth}
 {\singlespacing\small
\paragraph{\textsc{GF Multilinear}}~
%\vspace*{-1ex}
\begin{lstlisting}   
uint32 hash(uint32 * m, uint32 * p, 
  uint32 * endp) {
  __m128i sum = _mm_set_epi64x(0, *(m++));
  for(; p != endp; ++m, ++p ) {
    __m128i t = _mm_set_epi64x(*m, *p);
    __m128i c 
    = _mm_clmulepi64_si128( t, t, 0x10);
    sum = _mm_xor_si128 (c, sum);
  }
  return barret(sum);
}
\end{lstlisting}}

 \vspace*{15ex}

%\end{minipage}
%\vspace*{-2ex}
%\vspace*{5ex}  %stupid attempt to force header onto next column
%\begin{minipage}{\linewidth}
 {\singlespacing\small
\paragraph{\textsc{GF Multilinear-HM}}~
\vspace*{-1ex}
\begin{lstlisting}   
uint32 hash(uint32 * m, uint32 * p, 
    uint32 * endp) {
  __m128i sum = _mm_set_epi64x(0, *(m++));
  for(; p != endp; m += 2, p += 2 ) {
    __m128i t1 = _mm_set_epi64x(*m,*(m+1));
    __m128i t2 = _mm_set_epi64x(*p,*(p+1));
    __m128i t = _mm_xor_si128(t1, t2); 
    __m128i c = _mm_clmulepi64_si128(t, t, 0x10);
    sum = _mm_xor_si128 (c, sum);  
  }
  return barret(sum);
}
\end{lstlisting}
 }
%\end{minipage}

\begin{minipage}{.95\linewidth}
 {\singlespacing\small
\paragraph{\textsc{GF Multilinear-HM-Fast}}~

\begin{lstlisting}   
uint32 hash(uint32 * m, uint32 * p, 
    uint32 * endp) {
 // assume m, p, endp are 128-bit aligned
  __m128i z =  _mm_setzero_si128();
  __m128i sum = _mm_set_epi64x(0, *m);
  m += 4;
  __m128i t, u, t1, t2, ts, c1, c2;
  for(; p != endp; m += 4, p += 4 ) {
    t1 = _mm_load_si128((__m128i *) m);
    t2 = _mm_load_si128((__m128i *) p);
    ts = _mm_xor_si128(t1,t2); 
    t = _mm_unpacklo_epi32(ts,z);
    c1 = _mm_clmulepi64_si128(t, t,0x10);
    sum = _mm_xor_si128(c1,sum);   
    u = _mm_unpackhi_epi32(ts,z);
    c2 = _mm_clmulepi64_si128(u, u,0x10);
    sum = _mm_xor_si128 (c2,sum);     
  }
  return barret(sum);
}
\end{lstlisting}
 }

\end{minipage}
\end{document}